\newtheorem{theorembody}{Theorem}
\newcommand{\cmark}{\ding{51}}%
\newcommand{\xmark}{\ding{55}}%
\begin{document}
\title{HeterMoE: Efficient Training of Mixture-of-Experts Models on Heterogeneous GPUs}

\date{}

\author{
\rm{Yongji Wu$^{\text{1}, *}$\enskip
    Xueshen Liu$^{\text{2}, *}$ \enskip
    Shuowei Jin$^{\text{2}}$ \enskip
    Ceyu Xu$^{\text{4}}$ \enskip\\}
\rm{Feng Qian$^{\text{3}}$ \enskip
    Z. Morley Mao$^{\text{2}}$ \enskip
    Matthew Lentz$^{\text{4}}$ \enskip
    Danyang Zhuo$^{\text{4}}$ \enskip
    Ion Stoica$^{\text{1}}$ \enskip}\\
{$^{\text{1}}$UC Berkeley\enskip$^{\text{2}}$Univerisity of Michigan\enskip$^{\text{3}}$University of Southern California\enskip$^{\text{4}}$Duke Univerisity}
}

\newcommand{\sys}{HeterMoE\xspace}
\newcommand{\parallelism}{zebra parallelism\xspace}
\newcommand{\Parallelism}{Zebra parallelism\xspace}
\newcommand{\asymfull}{asymmetric expert assignment\xspace}
\newcommand{\Asymfull}{Asymmetric expert assignment\xspace}
\newcommand{\asymabbr}{Asym-EA\xspace}
\newcommand{\asymstrategy}{gather and squeeze}

\newcommand{\attngpusym}{\text{Attn}}
\newcommand{\expgpusym}{\text{Exp}}

\def\Snospace~{\S{}}
\renewcommand*\sectionautorefname{\Snospace}
\def\sectionautorefname{\Snospace}
\def\subsectionautorefname{\Snospace}
\def\subsubsectionautorefname{\Snospace}
\newcommand{\autorefsuffix}[2]{\hyperref[#1]{\autoref*{#1}#2}}
\renewcommand{\algorithmautorefname}{Algorithm}

\maketitle

\begin{abstract}
The Mixture-of-Experts (MoE) architecture has become increasingly popular as a method to scale up large language models (LLMs). To save costs, heterogeneity-aware training solutions have been proposed to utilize GPU clusters made up of both newer and older-generation GPUs. However, existing solutions are agnostic to the performance characteristics of different MoE model components (i.e., attention and expert) and do not fully utilize each GPU's compute capability.

In this paper, we introduce \sys, a system to efficiently train MoE models on heterogeneous GPUs. Our key insight is that newer GPUs significantly outperform older generations on attention due to architectural advancements, while older GPUs are still relatively efficient for experts. \sys disaggregates attention and expert computation, where older GPUs are only assigned with expert modules. Through the proposed \parallelism, \sys overlaps the computation on different GPUs, in addition to employing an asymmetric expert assignment strategy for fine-grained load balancing to minimize GPU idle time. Our evaluation shows that \sys achieves up to 2.3x speed-up compared to existing MoE training systems, and 1.4x compared to an optimally balanced heterogeneity-aware solution. \sys efficiently utilizes older GPUs by maintaining 95\% training throughput on average, even with half of the GPUs in a homogeneous A40 cluster replaced with V100. 

\end{abstract}
{\let\thefootnote\relax\footnote{{$^*$Yongji Wu and Xueshen Liu contributed equally.}}}

\section{Introduction}
Over the past few years, large language models (LLMs) have demonstrated impressive capabilities in domains like conversation agents~\cite{dam2024complete} and coding assistants~\cite{roziere2023code,team2024codegemma}. Recently, the sparsely-activated Mixture-of-Experts (MoE) architecture has gained popularity as the preferred way of scaling models to hundreds of billions of parameters~\cite{liu2024deepseek,yang2024qwen2}. MoE models are often trained with expert parallelism~\cite{lepikhin2020gshard} where the experts are distributed across GPUs, while token activations are exchanged between GPUs using all-to-all communication.

Meanwhile, GPU hardware is continuously evolving, with newer GPUs increasing performance but simultaneously cost.
The demand for newer GPUs is high, with the latest Grok~3~\cite{grok3-cost} and upcoming Llama~4~\cite{llama4-cost} requiring over 100K H100 GPUs (and costing over 4 billion)~\cite{100k-gpu-cluster}.
Moreover, the latest GPUs face significant supply constraints, often with a backlog of several months~\cite{nvidia-backlog,aws-genai-compute-2023}.

Therefore, we need to answer the following question: How can we effectively train MoE-based LLMs on clusters with multiple generations of GPUs?
Due to the high cost and limited supplies of latest GPUs, many organizations often retain nodes with older GPUs (e.g., V100) while adding new nodes with the latest GPUs (e.g., H100)~\cite{jayaram2023sia, benson2024cephalo} when upgrading infrastructure.
Given the extremely high resource requirement for LLM training, we need to utilize all available GPUs.

Leveraging heterogeneous GPUs is challenging due to their varying hardware properties (e.g., memory size, computation capability).
To split data, prior work has looked at scaling the batch size in data parallelism for each GPU~\cite{moreno2020training,kim2022scale}.
To split the model, prior work has explored unevenly distributing model layers in pipeline parallelism across GPUs~\cite{jia2022whale,um2024metis,yan2024flashflex}.
However, the fundamental limitation of these solutions is that they are agnostic to the existence of heterogeneity \emph{within} the model architecture itself.

Our insight is that different components of MoE models (i.e., attention and expert) exhibit distinct performance characteristics across GPU generations.
Older GPUs remain highly efficient for expert computation.
In contrast, attention performs significantly better on newer GPUs due to architecture-specific optimizations.
For instance, FlashAttention~v2~\cite{dao2023flashattention} exclusively supports Ampere and newer GPUs~\cite{flashattn-v100}, while FlashAttention~v3~\cite{shah2024flashattention} leverages Hopper-specific features like \texttt{wgmma} instructions and TMA~\cite{hopper-arch}. As experts are already placed across GPUs in expert parallelism, we are presented with an opportunity to assign each GPU only components it can efficiently compute, without introducing additional communication.

In this paper, we present \sys to efficiently train MoE models with heterogeneous GPUs.
\sys disaggregates the attention and expert blocks of a transformer layer, assigning them to two different generations of GPUs (newer and older, respectively).
The attention-expert disaggregation not only better harvests the compute power of older GPUs, but also alleviates the memory pressure on newer GPUs due to the dominant expert weights and their limited availability.

Still, there are two key challenges \sys must address. First, how can we overlap the computation of different GPUs to reduce the idle wait time of GPUs? Na\"ive attention-expert disaggregation leaves GPUs spending most of their time waiting for each other due to data dependency.
Second, to maximize the extent of overlapping, how can we balance the computation on each GPU at a fine granularity? Simply tuning the degree of parallelism leads to a narrow optimization space, limited by the number of valid configurations. For instance, in expert parallelism, the number of experts must be divisible by the number of GPUs to distribute them.

To address these challenges, we propose \parallelism, which divides an input batch to multiple micro-batches and overlaps the attention computation on the newer GPUs and expert computation on the older GPUs of different micro-batches. \Parallelism differs from pipeline parallelism, where the model is partitioned into multiple GPUs at the granularity of one or more layers. In pipeline parallelism, each sample is sequentially computed on each GPU with the corresponding layers, from the first GPU to the last one. In contrast, in \parallelism, the model is partitioned within a single layer, and each sample is computed in a zigzag pattern, passing back and forth between attention (newer) and expert (older) GPUs. To enable fine-grained load balancing of attention and expert GPUs, we propose an \asymfull (\asymabbr) mechanism, where we place a part of the experts back to attention GPUs when expert computation is slow. \asymabbr can be selectively activated for a subset of layers and moves back a different number of experts for different layers. We further develop a "\asymstrategy" strategy for \asymabbr to optimize each layer's assignment and minimize the GPU idle time, i.e., bubbles. 

We implemented \sys in PyTorch, evaluated across MoE models of different scales, using both an on-premise testbed and EC2 instances under different heterogeneity settings. Our results show that \sys outperforms existing MoE training systems by up to 2.3x and an optimally balanced heterogeneity-aware solution by 1.4x. In addition, \sys achieves 95\% training throughput on average compared to a homogeneous setting where all older GPUs are replaced by the newer ones.

We summarize our contributions as follows:
\begin{itemize}[leftmargin=*]
    \item  We observe that the performance disparity between newer and older generation GPUs differs significantly for attention and expert blocks, motivating our solution to disaggregate the two blocks for training MoE models on heterogeneous clusters, with no extra communication.
    \item We propose \parallelism to overlap not only the computation of attention GPUs with the computation of expert GPUs, but also the computation and all-to-all communication on each GPU.
    \item We design an \asymfull mechanism to selectively move different numbers of experts back to attention GPUs for different layers, enabling fine-grained load balancing for \parallelism to minimize bubbles.
\end{itemize}

\section{Background and Motivation}

\subsection{Mixture-of-Experts Models}

\begin{figure}
\centering
\includegraphics[width=0.9\linewidth]{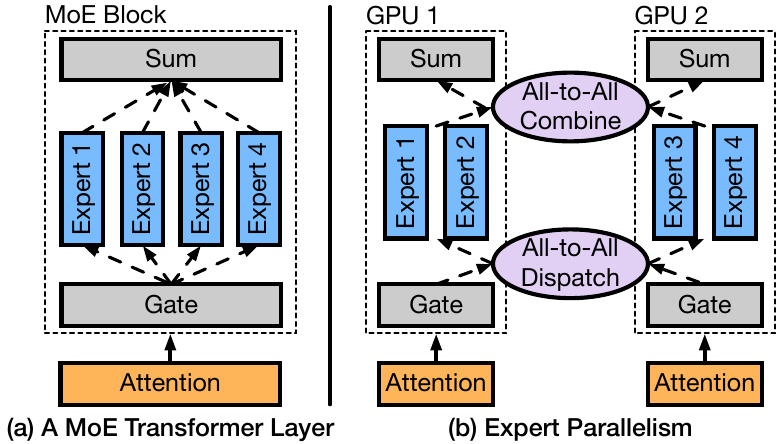}
\caption{MoE architecture and expert parallelism.}
\label{fig:moe_arch}
\end{figure}

The mixture-of-expert architecture, shown in \autoref{fig:moe_arch}, has been rapidly adopted in recent LLMs and has demonstrated as a cost-efficient solution for parameter scaling. In MoE models, the dense feed-forward network (FFN) in a transformer layer is replaced by multiple parallel FFNs, which are called experts. For each token, a trainable gate network takes the output embeddings of the attention block and computes a confidence score for each expert. Based on the scores, the token is only routed to the top-$k$ experts, whose outputs are weighted by their scores and summed up. The selective activation of experts enables MoE to scale model parameters with a sub-linear increase in computation costs. The attention blocks remain the same as dense LLMs. Unlike experts that mainly consist of GEMM operations, which are highly optimized on most GPUs, attention is less optimized. Only recently have implementations like FlashAttention~\cite{dao2022flashattention,dao2023flashattention,shah2024flashattention} been proposed, which are specifically tuned to recent GPU architectures and can significantly accelerate attention.

Expert weights have become increasingly dominant in the overall model size as sparsity increases. In Mixtral~8x7B~\cite{jiang2024mixtral}, 27.6\% of the parameters are active, while the number has decreased to 5.5\% in DeepSeek~v3~\cite{liu2024deepseek}. Expert parallelism is hence proposed for training MoE models, which distributes the experts of a MoE block to multiple GPUs. For each MoE block, a dispatch all-to-all collective is performed to send the embeddings of input tokens to GPUs with target experts, followed by a combine all-to-all that retrieves the outputs.

\subsection{Heterogeneous Training}
There is an increasing demand for training LLMs on heterogeneous GPU clusters due to frequent GPU release cycles, high upgrade costs and a lack of supply. Traditional training systems typically assume a homogeneous cluster and split workloads evenly across all GPUs. Under heterogeneous settings, such systems will let faster GPUs idle and wait for slower ones, significantly under-utilizing faster GPUs.  

To mitigate the bottlenecks caused by slower GPUs, existing heterogeneity-aware training systems like~\cite{jia2022whale,um2024metis,yan2024flashflex} unevenly split both data and models across different GPUs. They assign different batch sizes for different GPUs accordingly to balance the data load on each GPU, while to split the model, layers are partitioned according to each GPU's compute and memory capabilities.

However, these systems have several limitations for training MoE models. First, they do not differentiate different model components, failing to assign each component only to the GPUs that can efficiently execute it. Second, it is still infeasible to train large MoE models using only layer-based partitioning, as the weights of a single MoE block may exceed the GPU memory. Finally, since the model is partitioned at layer granularity, it can be challenging to find a partition that effectively balance the compute of different GPUs. Furthermore, balancing the compute often conflicts with memory capacities, hence limited by memory, faster GPUs may be assigned fewer layers required to fully utilize their compute capabilities, while slower GPUs may leave their memory underutilized due to their limited compute capabilities.

\begin{figure}
    \begin{subfigure}{0.46\linewidth}
        \centering
        \includegraphics[width=\linewidth]{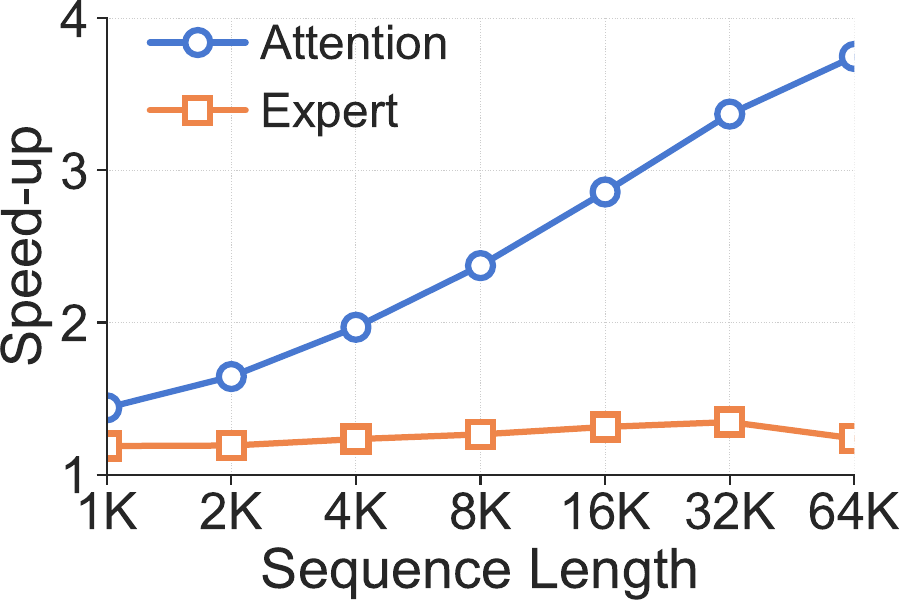}
        \caption{A40 over V100}
        \label{fig:background_speedup_a40_v100}
    \end{subfigure} \hfil
    \begin{subfigure}{0.46\linewidth}
    \centering
      \includegraphics[width=\linewidth]{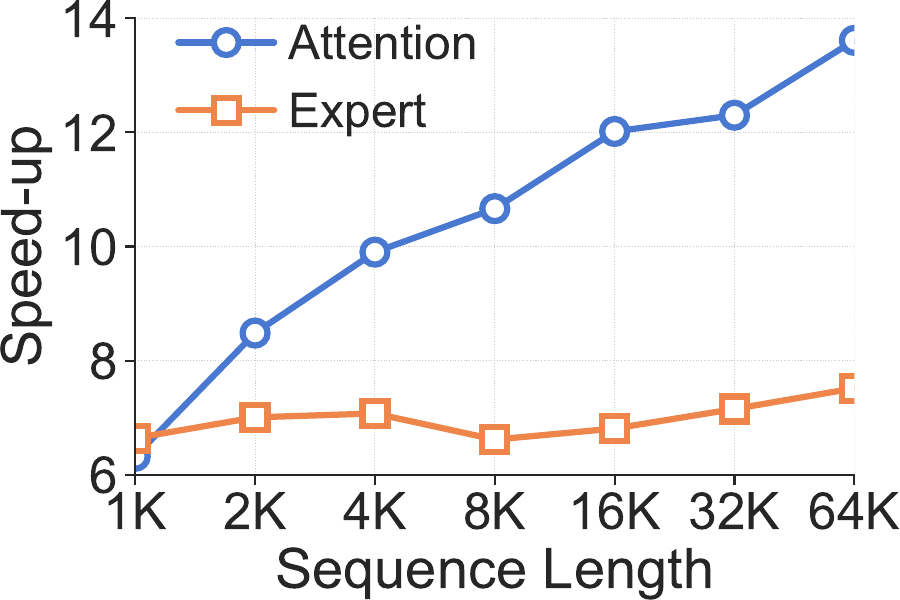}
        \caption{L40S over T4}
        \label{fig:background_speedup_l40_t4}
    \end{subfigure}
\caption{Speed-up of newer generation GPUs over older ones on attention and expert modules from Mixtral~8x7B~\cite{jiang2024mixtral}.}
\label{fig:background_speedup}
\end{figure}

\subsection{Opportunities}
We find that the relative efficiency of different GPUs differs significantly for different model components. In \autoref{fig:background_speedup}, we show the speed-up of two newer generation GPUs over two older ones. We measure each GPU's total forward and backward time for a single attention and expert module,using the model settings from Mixtral~8x7B~\cite{jiang2024mixtral}. We find that V100 GPUs are still quite capable for computing experts, achieving on average 80\% of A40's performance, which is equipped with the fully enabled flagship GA102 chip of Ampere generation. 
The relative performance of V100 compared to A40 on experts also remains stable across different sequence lengths, as each token is independently computed.

V100 suffers from much worse performance on attention modules as it does not support FlashAttention~\cite{flashattn-v100} and hence greatly bound by memory bandwidth, although we still use an optimized attention implementation~\cite{lefaudeux2022xformers}. The performance gap between A40 and V100 also rapidly widens as sequence length increases, contributed by the quadratic complexity of self-attention with respect to the sequence length. For 64K sequences, A40 outperforms V100 by 3.7x. From \autoref{fig:background_speedup_l40_t4}, since T4 is two generations older  with half the SMs of V100, L40S speeds-up MLP over T4 by 7.0x on average. Still, the speed-ups on attention is much more significant, with L40S outperforms T4 by 9.9x for 4K sequences and 13.6x for 64K sequences. We note that although T4 support FlashAttention~v1~\cite{dao2022flashattention}, due to limited SRAM sizes, it offers limited speeds-up and does not support larger attention head dimension~\cite{flashattn-t4} as used in \cite{jiang2024mixtral,yang2024qwen2,liu2024deepseek}.

Hence, to efficiently utilize older GPUs, we should avoid assigning them with attention operations. Existing heterogeneous training systems, however, are agnostic to such differences between attention and experts.

\section{Overview}

\begin{figure}
\centering
\includegraphics[width=0.95\linewidth]{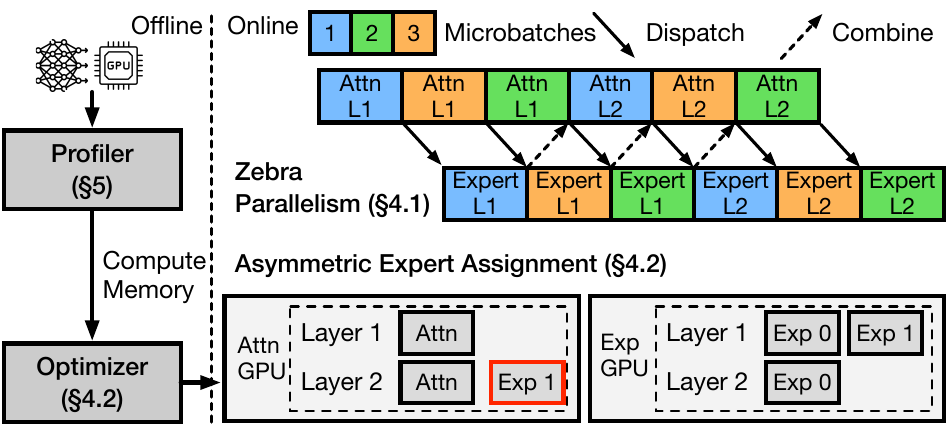}
\caption{Major components of \sys. \Parallelism overlaps the execution of attention and expert GPUs, while we introduce \asymabbr to minimize bubbles.}
\label{fig:overview}
\end{figure}

To efficiently train MoE models on heterogeneous GPU clusters, we propose \sys, a training system that disaggregates attention and expert computation within a transformer layer.
\sys assigns only expert blocks to older generation GPUs. Since MoE models are already trained using expert parallelism with all-to-all on homogeneous clusters, our disaggregation introduces no extra communication, as the total amount of data exchanged between GPUs remains unchanged, given the same global batch size. Such disaggregation not only improves the utilization efficiency of older GPUs, but also reduces memory pressure on newer GPUs by offloading expert weights to many older GPUs. It reduces the cost of MoE training by using fewer newer generation GPUs while having minimal performance degradation. 

We present the major components of \sys in \autoref{fig:overview}. We designate older generation GPUs that are only assigned with experts as expert GPUs, while newer GPUs are designated as attention GPUs.
Naively computing the attention and expert blocks on separate GPUs results in attention GPUs being idle while waiting for expert GPUs during expert computation (and vice versa).
To address this issue, we propose \parallelism (\autoref{sec:zp}), which divides each transformer layer into two stages, allowing attention and expert GPUs to simultaneously work on different microbatches.

To minimize the idle time, we need to closely balance the execution time of attention and expert GPUs for each microbatch.
We introduce \asymfull (\autoref{sec:asymea}), or \asymabbr, to enable fine-grained control of attention and expert computation across GPUs.
\asymabbr enables the migration of specific parts of the expert computation to (newer) attention GPUs; for instance, in \autoref{fig:overview}, Exp 1 for Layer 2 is moved from the Expert GPU to the Attention GPU to reduce bubbles in the compute stream.
Given this underlying mechanism for migration, we develop a "\asymstrategy" algorithm, which is implemented by the Optimizer (\autoref{sec:asymea}), to determine the best expert assignment.
The Optimizer depends on profiling information from the Profiler (\autoref{sec:impl}), such as the computation time and memory consumption for attention and expert blocks on the different GPUs.

\section{Design}
In \autoref{sec:zp}, we first discuss how \parallelism works, under the base case where attention GPUs are not assigned with any experts. Next, in \autoref{sec:asymea}, we discuss how  \asymabbr can assign attention GPUs  with some experts to reduce bubbles.

\subsection{Zebra Parallelism}
\label{sec:zp}
\Parallelism (ZP) takes the place of traditional expert parallelism (EP) under heterogeneous settings. In an EP group, the experts of each layer are evenly split across $E$ GPUs, while attention blocks are replicated on each GPU. In contrast, in a ZP group, expert modules are distributed on $N$ expert GPUs, while all other components, including attention blocks and input/output embedding layers are replicated on $M$ attention GPUs. 
ZP differs from pipeline parallelism (PP) as PP assigns segments of consecutive layers to different GPUs, while ZP splits modules within each layer to different GPUs. ZP can be used in conjunction with PP, where we have ZP to split experts in each PP stage.

In ZP, computation on attention and expert GPUs are overlapped, as they process different microbatches at a time. Similar to EP. ZP still relies on all-to-all communication to exchange tokens.
The communication in ZP is bipartite. During the forward phase, each of the $M$ attention GPUs only sends tokens to the $N$ expert GPUs during dispatch, and only receives from expert GPUs during combine. During the backward phase, the communication is reversed. 

We consider network links between attention and expert GPUs have sufficient bandwidth so that dispatch and combine communication is faster than the attention and expert computation, similar to existing MoE training systems~\cite{li2023accelerating,hwang2023tutel}. We also assume a homogeneous network topology, otherwise, if the interconnect bandwidth within attention or expert GPUs is faster than the links between them, hierarchical all-to-all optimizations~\cite{hwang2023tutel,nie2022hetumoe} can be applied for \sys to take advantage of.
Still, all-to-all takes up to 30\%-50\% of the overall training time~\cite{hwang2023tutel,li2023accelerating,zhang2025comet}, hence
\sys also overlaps it with computation on each GPU.

\begin{figure}
\centering
\includegraphics[width=0.98\linewidth]{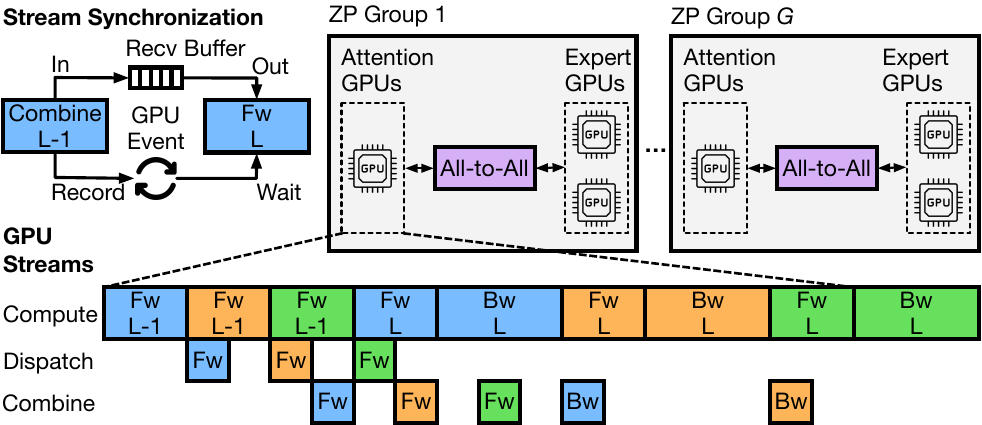}
\caption{Zebra parallelism replaces expert parallelism in heterogeneous settings. It overlaps compute on attention and expert GPUs, as well as compute and communication within each GPU.}
\label{fig:zp}
\end{figure}

To enable the overlapping, we use separate GPU streams for computation and communication. 
Since dispatch and combine all-to-all communicate in opposite directions and cause no bandwidth contention, they are scheduled on two independent communication streams and are overlapped. Hence, on each GPU, \sys maintains two streams for communication and one stream for computation. To maintain the data dependency between computation and communication tasks, \sys uses GPU events to synchronize different streams. For instance, to receive the input data, an all-to-all kernel is first launched (enqueued) on one of the communication stream, \sys then  records the corresponding event. The computation stream would wait for that event and block the execution of attention or expert computation until the all-to-all is finished and the data in the receive buffer is ready.

With zebra parallelism to overlap computation on different GPUs, as well as computation and communication on the same GPU, we still need to feed it with an execution schedule to minimize the per-iteration training time. We use $A^{F}_{i,j}, E^{F}_{i,j},D^{F}_{i,j},C^{F}_{i,j}$ and $A^{B}_{i,j}, E^{B}_{i,j},D^{B}_{i,j},C^{B}_{i,j}$ to denote the forward/backward attention computation ($A$), expert computation ($E$), dispatch ($D$) and combine ($C$) all-to-all tasks for the $j$-th microbatch on the $i$-th layer. A task can start its execution as long as its dependent task is finished, and all previous tasks on the same stream finishes. For instance, $A^{F}_{i,j}$ can start as long as:
\begin{equation*}
\left\{
\begin{aligned}
    t(A^{F}_{i,j}) \geq t(C^{F}_{i-1,j}) + T_C, && i \in{(1,L]}, j\in[1,R] \\
    \left|t(A^{F}_{i,j})-t(A^{F}_{i^\prime,j^\prime})\right|\geq T_{A}, && \forall(i^\prime,j^\prime)\neq(i,j)
\end{aligned}
\right.
,
\end{equation*}
where $t(\cdot)$ is the start time of a task, $L$ is the number of layers, $R$ is the number of microbatches. $T_A$ and $T_C$ are the duration of attention computation and combine all-to-all communication for a microbatch. The first constraint is for data dependency, the attention computation of a microbatch cannot start before the previous layer's expert outputs are received. The second constraint enforces sequential execution of a stream, i.e., the attention GPU can only compute a single microbatch at a time. Intuitively, an optimal schedule should enable each task to start as soon as possible.

\begin{figure}
\centering
\includegraphics[width=0.80\linewidth]{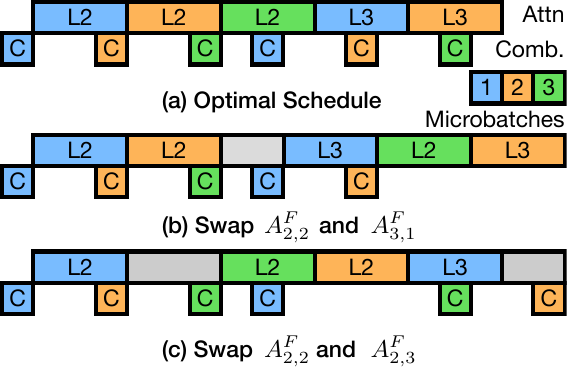}
\caption{Demonstration of the optimal execution schedule of \parallelism, compared to swapping any of the two tasks. We show the forward attention computation and combine all-to-all of the second and third layers.}
\label{fig:optimal_schedule}
\end{figure}

\begin{theorembody}
For a MoE model trained using \sys's \parallelism, where all experts are placed on expert GPUs while the execution of tasks follow data dependency and stream sequential execution constraints, the following execution schedule on each GPU minimizes the total time of a training iteration: \\
for computation on attention GPUs:
\begin{equation*}
(A^{F}_{1,1}\cdots A^{F}_{1,R})\cdots(A^{F}_{L,1}A^{B}_{L,1}\dots A^{F}_{L,R}A^{B}_{L,R})\cdots(A^{B}_{1,1}\cdots A^{B}_{1,R}),
\end{equation*}
for computation on expert GPUs:
\begin{equation*}
\begin{aligned}
&(E^{F}_{1,1}\cdots E^{F}_{1,R})\cdots(E^{F}_{L-1,1}\cdots E^{F}_{L-1,R})\\
&(E^{B}_{L-1,1}\cdots E^{B}_{L-1,R})\cdots(E^{B}_{1,1}E^{B}_{1,R}).
\end{aligned}
\end{equation*}
Following the above ordering of compute tasks, the dispatch and combine all-to-all tasks are scheduled to corresponding streams as soon as their dependent tasks are scheduled.
\label{theorem:zp_optimal_schedule}
\end{theorembody}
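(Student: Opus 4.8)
The plan is to prove optimality in two movements — a makespan lower bound that every valid schedule must obey, then a demonstration that the prescribed schedule attains it — preceded by one reduction. By the stated bandwidth assumption (each dispatch and combine is faster than the attention/expert computation it sits between), once the \emph{compute} orderings on the attention stream and the expert stream are fixed, launching every $D$ and $C$ task as soon as its producer permits never introduces a new bottleneck: the all-to-all always completes inside the compute task that consumes its output, and runs on a dedicated stream with no bandwidth contention against the opposite-direction transfer. Hence it suffices to show (a) the prescribed order of compute tasks on each GPU is optimal and (b) that the ASAP rule for communication is then without loss; both reduce to bounding and matching the makespan of the compute schedule.

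For the lower bound I would combine three families of constraints. First, stream-occupancy bounds: the attention stream must serialize all $2LR$ attention tasks and cannot begin before the iteration's input is available, so the makespan is at least (input-ready time) $+\sum$(attention task durations); symmetrically the expert stream serializes all of its expert tasks and its first task $E^{F}_{1,j}$ cannot start before $A^{F}_{1,j}$ and the ensuing dispatch finish, giving a second bound. Second, a critical-path bound: for any microbatch $j$ there is a chain $A^{F}_{1,j}\!\to\! D^{F}_{1,j}\!\to\! E^{F}_{1,j}\!\to\! C^{F}_{1,j}\!\to\! A^{F}_{2,j}\!\to\!\cdots\!\to\! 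A^{F}_{L,j}$, through the loss, and back $A^{B}_{L,j}\!\to\!\cdots\!\to\! A^{B}_{1,j}$, whose total length lower-bounds the makespan. Third, a coupling bound at the turnaround layer $L$, where forward and backward of a given microbatch must interleave on the attention stream because $A^{B}_{L,j}$ cannot precede the completion of the layer-$L$ expert round-trip initiated by $A^{F}_{L,j}$. Let $T^\star$ be the maximum of these quantities.

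To establish achievability I would induct along the prescribed schedule order — forward sweep layer by layer, all $R$ microbatches of a layer before advancing, then the interleaved block at layer $L$, then the backward sweep — proving that each compute task starts at exactly $\max(\text{finish time of its dependency},\ \text{time its stream becomes free})$ and that this equals the value forced by the lower-bound analysis. The crucial sub-claim is a \emph{no-avoidable-bubble} statement: under the prescribed order, the bottleneck stream is continuously busy from its first task to its last except for idle intervals that are themselves forced by the critical path. This is exactly where "finish all microbatches of a layer before advancing" earns its keep — since $A^{F}_{i,j}$ becomes ready only after the entire round-trip through layer $i-1$, any order that jumps to layer $i+1$ before layer $i$ is exhausted would stall on an unready task while same-layer tasks remain ready; a mirror-image argument applies to the expert stream. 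To connect with the figure's local picture, I would also phrase this as an exchange argument: swapping any two adjacent compute tasks on a stream either violates a dependency or leaves the finish time of every downstream task unchanged or later, so no ordering beats the prescribed one, and the ASAP communication rule then follows.

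The main obstacle I anticipate is the boundary bookkeeping rather than the interior "zebra" steady state. Handling the turnaround at layer $L$ cleanly — the interleaved $A^{F}_{L,j}A^{B}_{L,j}$ block, the interaction with the loss and output-embedding steps, the precise instant the first backward task becomes ready, and reconciling this with the expert-stream schedule terminating at layer $L-1$ — is what makes the lower bound delicate, because $T^\star$ must be sharpened enough at these endpoints to coincide with what the prescribed schedule achieves rather than merely bound it. A secondary subtlety is making the exchange argument non-local: a single adjacent swap can in principle delay a task several steps downstream, so I would track, for each task, its earliest feasible start as a function of the prefix of the stream order and show this quantity is monotone as one "un-swaps" toward the prescribed order; the interior argument then collapses to the routine observation that $R$ microbatches keep the pipeline full whenever the per-microbatch attention and expert times permit it at all.
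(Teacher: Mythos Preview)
Your proposal is sound, but its primary route differs from the paper's. The paper gives a pure \emph{exchange argument}: it fixes the candidate schedule and shows, by three representative cases (swapping $A^{F}_{i,j}$ with $A^{F}_{i+1,j'}$ across layers, swapping $A^{F}_{i,j}$ with $A^{F}_{i,j'}$ within a layer, and swapping $A^{B}_{L,j}$ with $A^{F}_{L,j'}$ at the turnaround), that the earliest possible start time of the later task never improves, hence no downstream task can finish sooner. It never names or computes an optimal makespan $T^{\star}$. Your plan instead leads with a lower-bound/achievability decomposition --- stream occupancy, critical path, and a layer-$L$ coupling bound --- and only afterward recasts the result as an exchange argument ``to connect with the figure.'' The exchange argument you sketch (tracking earliest feasible starts as a monotone function of the prefix order) is in fact the paper's entire proof, stated somewhat more carefully about non-locality than the paper itself; your caution that a single adjacent swap can delay a task several steps downstream is exactly the subtlety the paper's case analysis glosses over.

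The trade-off: your lower-bound route, if it goes through, yields a closed-form for the optimal iteration time and explains structurally \emph{why} the schedule is optimal, whereas the paper's swap argument is shorter and sidesteps the tightness problem you correctly flag as the main obstacle --- for this two-machine flow-shop-like structure the $\max$ of your three bound families is not obviously tight at the boundary layers, and sharpening $T^{\star}$ there may be as much work as the whole exchange proof. If you pursue your plan, be prepared for the possibility that the achievability induction succeeds while the lower bound remains loose, in which case you will have to fall back on the exchange argument anyway; conversely, the paper's proof could be strengthened by your monotone earliest-start framing, since its ``similarly proved'' remarks for the expert stream and the backward cases are doing real work that your formulation makes explicit.
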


\begin{proof}
We prove that swapping any two tasks on the same stream will not reduce the total iteration time, regardless of how tasks on other streams are scheduled, as long as data dependencies are maintained. We show the proof for the computation stream on attention GPUs. The schedules for other streams can be similarly proved.

First, if we switch the order of $A^{F}_{i,j}$ and $A^{F}_{i+1, j^\prime}$, where $1\leq i  < L$ and $1\leq j^{\prime} <j\leq R$, the earliest possible start time $t^{\prime}(A^{F}_{i+1,j^\prime})$ after swapping will not be earlier than $t(A^{F}_{i,j})$ when the two tasks are not swapped, as $A^{F}_{i+1, j^\prime}$ needs to wait for the completion of $C^{F}_{i,j^\prime}$ and $A^{F}_{i,j-1}$. If the compute tasks on expert GPUs are not swapped, we have $t^\prime(C^{F}_{i,j^\prime}) >t(C^{F}_{i-1,j})$, otherwise we have either $t^\prime(C^{F}_{i,j^\prime}) \geq  t(C^{F}_{i-1,j})$ or $t^\prime(A^{F}_{i,j-1}) \geq  t(A^{F}_{i,j-1})$, depending on how the tasks are shuffled. Hence, the earliest start time of subsequent compute tasks will also be equal to or larger than their start time when $A^{F}_{i,j}$ is scheduled before $A^{F}_{i+1, j^\prime}$. \autoref{fig:optimal_schedule}(b) shows this case. Similarly, swapping $A^{B}_{i+1,j}$ with $A^{B}_{i, j^\prime}$ leads to sub-optimal schedules. 

Second, if we swap $A^{F}_{i,j}$ and $A^{F}_{i,j^\prime}$, where $1\leq i \leq L$ and $1 \leq j <j^{\prime} \leq R$, the earliest possible start time $t^{\prime}(A^{F}_{i,j^\prime})$ will be no earlier than the earliest start time $t(A^{F}_{i,j})$ when the two tasks are not swapped, since either $t^\prime(C^{F}_{i-1,j^\prime}) \geq t(C^{F}_{i-1,j})$ or $t^\prime(A^{F}_{i,j-1}) \geq t(A^{F}_{i,j-1} )$ (for $j=1$, it is $A^{F}_{i-1,R}$). \autoref{fig:optimal_schedule}(c) shows this case. Similarly, we can prove the total execution time will not decrease if we swap $A^{B}_{i,j}$ and $A^{B}_{i,j^\prime}$. 

Finally, if we swap $A^{B}_{L,j}$ with $A^{F}_{L,j^\prime}$ where $1\leq j < j^{\prime}\leq R$, the start time $t^{\prime}(A^{F}_{L,j^\prime})$ after swapping will not be earlier than $t(A^{B}_{L,j})$ when not swapped. All subsequent compute tasks will not start earlier. In particular, the execution of $A^{B}_{L,j}$ and $C^{B}_{L-1,j}$ will be deferred.
\end{proof}

With ZP to overlap the computation, we next explore how \sys further balances the computation of each micro-batch between the attention and expert GPUs in a fine-grained manner to minimize bubbles in a training iteration.

\subsection{Asymmetric Expert Assignment}
\label{sec:asymea}

\begin{figure*}
\centering
\includegraphics[width=0.99\linewidth]{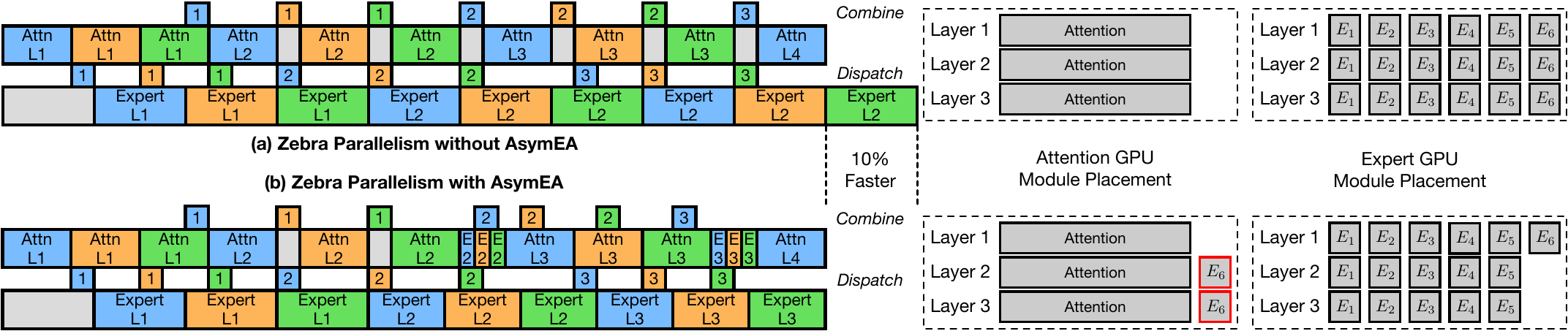}
\caption{\asymabbr enables fine-grained balance of computation load on attention and expert GPUs to minimize bubbles.}
\label{fig:asymea}
\end{figure*}

In \sys's \parallelism, the computation time on attention and expert GPUs for a single microbatch and a single layer is determined given the number of attention and expert GPUs in each ZP group, i.e., $M$ and $N$ respectively. Expert computation takes place on lower-end GPUs and typically dominates the overall computation, especially for shorter sequences. Therefore, attention GPUs may be left with a series of bubbles as a result of waiting for the next microbatch's expert computation (and combine all-to-all) to finish after completing their attention computation. In \autoref{fig:asymea}, we analyze a concrete scenario in which computation on expert GPUs is 33\% slower than on attention GPUs. Given three microbatches, we show the execution schedule for the forward pass of the first three layers. For default \parallelism (shown in \autoref{fig:asymea}a), despite the fact that expert GPUs are fully occupied by computation tasks, attention GPUs only spend 75\% time on effective computation. 

To increase the utilization of attention GPUs, we propose an \asymfull (\asymabbr) strategy where we offload some of the expert computation onto the (more powerful) attention GPUs. 
Given the ratio of experts to offload, \sys splits the experts across attention and expert GPUs in a ZP group,  During dispatch and combine, tokens are sent to and received from both attention and expert GPUs. On each attention GPU, \sys separates attention and expert computation of each microbatch, as it must receives tokens from other attention GPUs before computing experts.

In \autoref{fig:asymea}(b), we put one of the experts of the 2nd and 3rd layer to attention GPUs, while expert GPUs only handle five.  For each layer, we schedule the expert computation on attention GPUs after attention computation of all microbatches.
In this case, although the attention GPUs still have bubbles before the 2nd layer's attention computation, all subsequent computation becomes bubble-free. The 60\% reduction in bubbles directly translates to 10\% speed-up of the total forward time of the first 3 layers.

However, for a single layer, even if we only offload a single expert in this example, the computation on attention GPUs for each microbatch becomes longer than that on expert GPUs. If we simply offload the same number of experts for every layer, bubbles may shift to expert GPUs. For example, if we also offload an expert in the first layer in \autoref{fig:asymea}(b), the total execution time does not decrease, as expert GPUs would then suffer from 6\% bubbles. Therefore, we need to selectively offload specific layers.

To balance the total workloads on attention and expert GPUs, hence reducing the bubbles, we propose an algorithm based on a "\asymstrategy" strategy to determine the set of layers to offload experts, and the number of experts to offload.
Our high-level insight is to "gather" enough bubbles on attention GPUs across several consecutive layers until we can offload at least one expert to "squeeze" the bubbles. 
We present this algorithm in \autoref{algo:asymea}.

\begin{algorithm}[t]
\SetAlgoLined
\SetAlgoNoEnd
\DontPrintSemicolon
\SetKwInOut{Input}{Input}
\SetKwInOut{Output}{Output}
\newcommand\myCommentStyle[1]{\small\textcolor{gray}{#1}}
\SetCommentSty{myCommentStyle}
\SetKwComment{Comment}{// }{}

\Input{$n$: Number of experts of each layer; $L$: Number of layers; $M,N$: Number of attention GPU and expert GPUs in a ZP group; $T^{\attngpusym}_{A}, T^{\attngpusym}_{E},T^{\expgpusym}_{E}$: Compute time of attention modules and expert modules on attention and expert GPUs.}
\Output{
$\mathcal{O}:\{o_1,o_2,\dots,o_L\}$: Number of experts to offload from each expert GPU at each layer. 
}
    
$n_1 \gets \max(1, \frac{N}{M})$\;
\Comment{\#experts each attention GPU acquires for a single chunk}
\label{eq:line_chunk1}
$n_2 \gets n_1 \cdot \frac{M}{N}$\; \Comment{\#experts each expert GPU offloads for a single chunk}
\label{eq:line_chunk2}
$T_{\text{gather}}\gets T^{\expgpusym}_{E}-T^{\attngpusym}_{A}$\;
\Comment{bubble at each layer without \asymabbr}
\label{eq:t_gather}
$T_{\text{squeeze}} \gets \frac{T^{\expgpusym}_{E}N}{n}n_1 + \frac{T^{\attngpusym}_{E}N}{n}n_2 $\;
\Comment{bubble eliminated by offloading a chunk of $n_2$ experts}
\label{eq:t_squeeze}
$t_{\text{bubble}} \gets 0$ \Comment{total accumulated bubble}
\For{$l \gets 1$ to $L$}{ \label{eq:start_for_loop}
    $t_{\text{bubble}} \leftarrow t_{\text{bubble}} + T_{\text{gather}} $\; \label{eq:acc_bubble}
    \If{$t_{\text{bubble}} \geq T_{\text{squeeze}}$}{
         \Comment{gather enough bubbles until we can squeeze}
         $o_l \gets \lfloor t_{\text{bubble}} / T_{\text{squeeze}} \rfloor$ \Comment{\#chunks to offload}
         \label{eq:chunks_offload}
         $t_{\text{bubble}} \gets t_{\text{bubble}} - o_l \cdot T_{\text{squeeze}}$\;
         $o_l \gets o_l \cdot n_2$
         \label{eq:experts_offload}
    }
}
\Return{$\mathcal{O}$}
\caption{\asymabbr offload optimization.}
\label{algo:asymea}
\end{algorithm}

First, given the ZP group setup of $M$ and $N$, we profile (\autoref{sec:impl}) the forward compute time of a single layer's attention module for a microbatch on attention GPUs, $T^{\text{Attn}}_{A}$, and a single layer's expert module on expert GPUs, $T^{\text{Exp}}_{E}$. 
We note that $T^{\text{Exp}}_{E}$ depends on the amount of tokens each expert GPU processes instead of the experts it is assigned.
We also profile the compute time of a single expert (FFN) with the same batch size on attention GPUs as $T^{\attngpusym}_{E}$. The profiled time is fed into \autoref{algo:asymea} as inputs.

Since each expert GPU and attention GPU must offload or acquire the same number of experts to ensure they have consistent workloads, and the total number of experts offloaded must equal that acquired by all attention GPUs, we define in line~\ref{eq:line_chunk1}-\ref{eq:line_chunk2} the minimum number of experts $n_1$, that each attention GPU must acquire, and the minimum number of experts $n_2$ each expert GPU must offload. $n_1$ and $n_2$ forms the minimal chunk (unit) for offloading. We can only offload one or multiple chunks at each layer. To make sure $n_1$ and $n_2$ are integers, we assume either $M$ is a multiple of $N$, or $N$ is a multiple of $M$. The limitation only applies when \asymabbr optimization is used, and is similar to the restriction of traditional EP that the number of GPUs in a EP group must divides the number of experts. We also assume we have enough microbatches to overlap the communication on expert GPUs and fully saturate their computation. 

In line~\ref{eq:t_gather}, we compute $T_{\text{gather}}$, the bubble formed at each layer for a microbatch when \asymabbr is not used. $T_{\text{squeeze}}$ in line~\ref{eq:t_squeeze} is the bubble we can shrink by offloading a single chunk of $n_2$ experts from each expert GPU. $N\cdot T^{\expgpusym}_{E}/n$ is the compute time reduced on each expert GPU by offloading a single expert, as it receives fewer tokens. $N\cdot T^{\attngpusym}_{E}/n$ is the extra compute time added to each attention GPU, for each expert it acquires. We gather the bubbles in line~\ref{eq:acc_bubble} across multiple layers, until they are large enough to offload at least a single chunk. In line~\ref{eq:chunks_offload}, we compute how many chunks we can offload, and multiply by $n_2$ in line~\ref{eq:experts_offload}, we get the number of experts to offload from each expert GPU. 


We follow the forward pass to squeeze the bubbles and use profiled forward compute time as inputs. We note that backward time for each module scaled proportionally to the forward time, the assignment optimized from forward pass reduces both forward and backward time. 

\noindent
\textbf{Addressing memory limitations.}
In practice, however, the number of experts to offload are limited by the memory capacities of both attention and expert GPUs. On the one hand, if the $N$ expert GPUs cannot hold all experts across all layers, \sys must offloads some experts, and the total of experts to offload, $\sum{\mathcal{O}}$ will be lower bounded by $n_{\min}$. On the other hand, $\sum{\mathcal{O}}$ will be upper bounded by $n_{\max}$, as attention GPUs cannot hold too many experts. To take account of such limitations, we modify line~\ref{eq:acc_bubble} as follows:

\begin{equation*} \label{eq:alpha_beta_t}
t_{\text{bubble}} \leftarrow t_{\text{bubble}} + \alpha\cdot\beta \cdot T_{\text{gather}} 
\end{equation*}
where
\begin{equation*}
\begin{aligned}
\alpha &= \min\left(\frac{\lfloor n_{\max} /n_2\rfloor \cdot T_{\text{squeeze}}}{L \cdot T_{\text{gather}}}, 1\right)\\
\beta &= \max\left(\frac{\lceil n_{\min} /n_2\rceil  \cdot T_{\text{squeeze}}}{L \cdot T_{\text{gather}}}, 1\right).
\end{aligned}
\end{equation*}

Without the coefficients $\alpha$ and $\beta$, the amount of bubble we can gather over all $L$ layers is $L\cdot T_{\text{gather}}$, and we would offload  $\lfloor\frac{L\cdot T_{\text{gather}}}{T_{\text{squeeze}}}\rfloor$ chunks. If the number of chunks exceeds $\lfloor n_{\max} /n_2\rfloor$, i.e., the maximum allowed by the attention GPU's memory, we add a coefficient $\alpha < 1$ to $T_{\text{gather}}$. $\alpha$ enforces the upper bound, while line~\ref{eq:start_for_loop}-\ref{eq:experts_offload} spreads the offload chunks across layers. Similarly, $\beta$ enforces the lower bound of $\lceil n_{\min} /n_2\rceil$ chunks. Note that we have either $\alpha=1$ or $\beta=1$, since at most one of them is activated when $\lfloor\frac{L\cdot T_{\text{gather}}}{T_{\text{squeeze}}}\rfloor$ goes beyond the lower or upper bounds.

\section{Implementation}
\label{sec:impl}
We implement \sys in 3K lines of Python based on PyTorch~v2.2~\cite{imambi2021pytorch}, with components from 
DeepSpeed~v0.14~\cite{rasley2020deepspeed}. 

\noindent
\textbf{\Parallelism engine.} Given a user-defined MoE model, using the MoE block provided by \sys and the ZP group setup, our \parallelism engine splits the model in each ZP group as in \autoref{sec:zp} and manages the training. During initialization, the ZP engine setups the three streams for compute and communication and allocates the receive buffers for each microbatch. It also establishes required collective communication groups. For all-to-all, \sys creates separate groups for dispatch and combine to enable concurrent communication on separate streams, with each group containing all GPUs in the EP group. \sys feeds unequal split sizes to PyTorch's NCCL all-to-all wrapper to distribute different number of tokens to different GPUs, based on the expert assignment. We implement ZP and data parallelism in our prototype, but \sys can be extended to support ZP in conjunction with other parallelization strategies, e.g., heterogeneity-aware pipeline parallelism~\cite{yan2024flashflex,um2024metis}. 

We note that the gate network generates a confidence score for each of the top-$k$ experts, which is used to compute weighted sum of the expert outputs. Such computation paradigm forms a "residual" connection. The backward pass from the MoE block's outputs is hence extended to two branches, one of them  propagates to confidence scores and the gate network's weights. This branch would propagate all the way back to the first layer, as it does not depend on the gradients from expert GPUs. To prevent repetitive backward on the same weights, \sys stops the backward propagation at attention outputs for each layer. The backward of previous layers would not start until the gradients of attention outputs (expert inputs) are received from expert GPUs and are accumulated with the gradients from the second branch.

\noindent
\textbf{Profiler.} We implement a profiler for \sys to obtain the compute time that is required by the \asymabbr optimizer as inputs. The profiler extracts a single expert FFN from a transformer layer. Given the global batch size, sequence length, the number of microbatchs and the ZP group setup, the profiler determines the number of tokens $B$ each expert GPU computes for a single microbatch. It then generates random tensors with a batch size of $B$ and feds it to the FFN to profile the expert compute time on both an attention ($T^{\attngpusym}_{E}$) and an expert GPU ($T^{\expgpusym}_{E})$. The remaining part of the transformer layer, including attention blocks and MoE gate, is extracted and profiled on an attention GPU to get $T^{\attngpusym}_{A}$, where the input size is set to the size of a microbatch.

Our profiler also measures the memory usage of the GPU. On an expert GPU, it constructs a single expert FFN, and executes forward and backward passes with dummy inputs. It then measures the memory usage, which contains activations, weights, gradients, and optimizer states. Based on the remaining available memory, the profiler estimates how many experts in total can expert GPUs hold. Subtracting it from the total number of experts, we get $n_{\min}$, the minimal number of experts \sys needs to offload. Similarly, on an attention GPU, the profiler constructs the model by excluding all expert modules. The forward and backward pass is executed by replacing all-to-all with dummy operations. The profiler then estimates $n_{\max}$, the maximum number of experts each attention GPU can hold. We only need to run the profiler once for each setup, and it only requires a single attention GPU and a single expert GPU.

\section{Evaluation}
\subsection{Setups}
\label{sec:eval_setups}

\begin{table}[t]
\caption{Cluster settings used in the evaluation. O refers to our on-premise cluster and C refers to AWS.}
\label{tab:cluster-setup}
\newcolumntype{P}[1]{>{\centering\arraybackslash}p{#1}}
\begin{tabular}{|P{18mm}|P{8mm}|P{8mm}|P{8mm}|P{8mm}|P{8mm}|}
\hline
Setup & \texttt{O1} & \texttt{O2} & \texttt{O3} & \texttt{C1} & \texttt{C2} \\
\hline
\multirow{2}{*}{\#GPUs} & 6xA & 4xA & 6xA & 2xL & 2xL\\
\cline{2-6}
 & 6xV & 8xV & 3xV & 6xT & 8xT\\ 
\hline
\multirow{2}{*}{GPU Model} & \multicolumn{3}{c|}{A: A40~(48GB)} & \multicolumn{2}{c|}{L: L40S~(48GB)}\\
\cline{2-6}
& \multicolumn{3}{c|}{V: V100~(16GB)} & \multicolumn{2}{c|}{T: T4~(16GB)}\\
\hline
Network & \multicolumn{3}{c|}{100~Gbps} & \multicolumn{2}{c|}{200~Gbps$^{{\ast}}$}\\
\hline
\end{tabular}
\end{table}

\noindent
\textbf{Cluster setups.}
We evaluate \sys on both an on-premise testbed and on AWS. On each testbed, we configure cluster setups with different numbers of GPUs of each type. \autoref{tab:cluster-setup} lists the setups we use. The GPUs in our on-premise testbed are connected with 100~Gbps Mellanox ConnectX-6 RoCE NICs. On AWS, we use g4dn.4xlarge and g6e.4xlarge instances, where they have 20~Gbps TCP networks. However, we find that under 20~Gbps links, communication takes up to 70\% of the total training time. As a token's communication time is much longer than its compute time for both attention and experts, it is impossible to overlap compute and communication, resulting in all compared methods bottlenecked by communication. 
Therefore, to match L40S's compute speed, we simulate a network connection of 200~Gbps on AWS by reducing the amount of data transferred in all-to-all.

\begin{table}[t]
\caption{Configurations of models used in the evaluation.}
\label{tab:model_comparison}
    \centering
    \resizebox{0.97\columnwidth}{!}{
    \begin{tabular}{lccccc}
        \toprule
        Model & \#Layers & Hidden Dim & \#Experts & \#Params\\ 
        \midrule
        Mixtral-W1 & 4 & 2048 & 12 & 2.2B \\
        \midrule
        Mixtral-W2  & 4 & 2048 & 24 & 4.3B \\
        \midrule
        Mixtral-D1  & 8  & 1024 & 24  & 2.1B \\
        \midrule
        Mixtral-D2 & 6 & 1024 & 18 & 1.2B \\
        \midrule
        Mixtral-D3  & 8 & 1024 & 40 & 3.5B \\
        \bottomrule
    \end{tabular}
    }
\end{table}

\begin{figure*}
\centering
\includegraphics[width=0.97\linewidth]{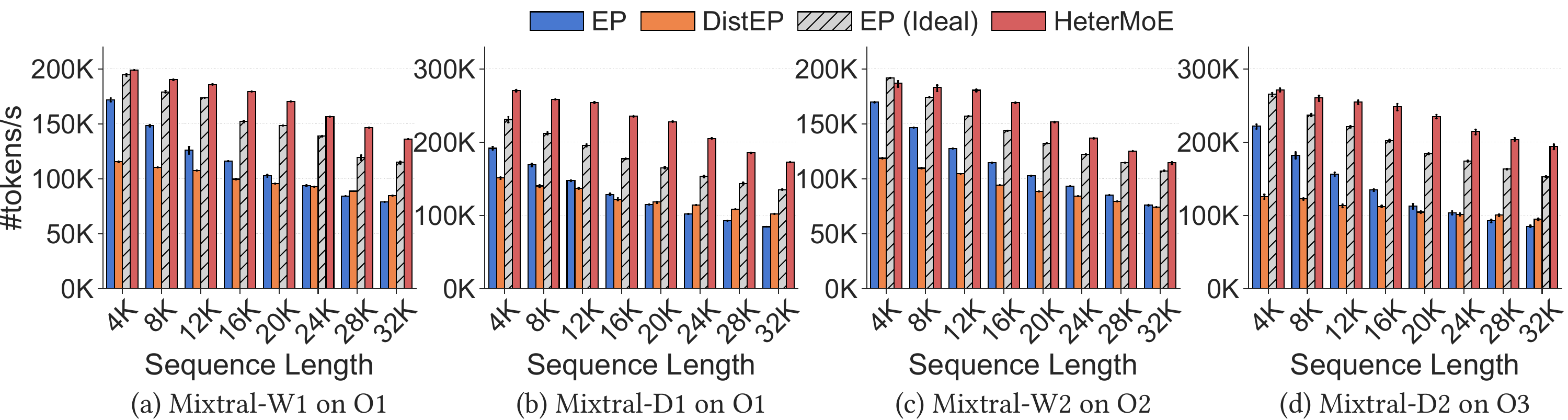}
\caption{\textbf{[Overall Performance]:} Overall training throughput under different sequence lengths on the on-premise cluster of A40 and V100 GPUs, with different models and GPU setups. Error bars represent 95\% confidence intervals.}
\label{fig:overall_on-premise}
\end{figure*}
\begin{figure}
\centering
\includegraphics[width=0.98\linewidth]{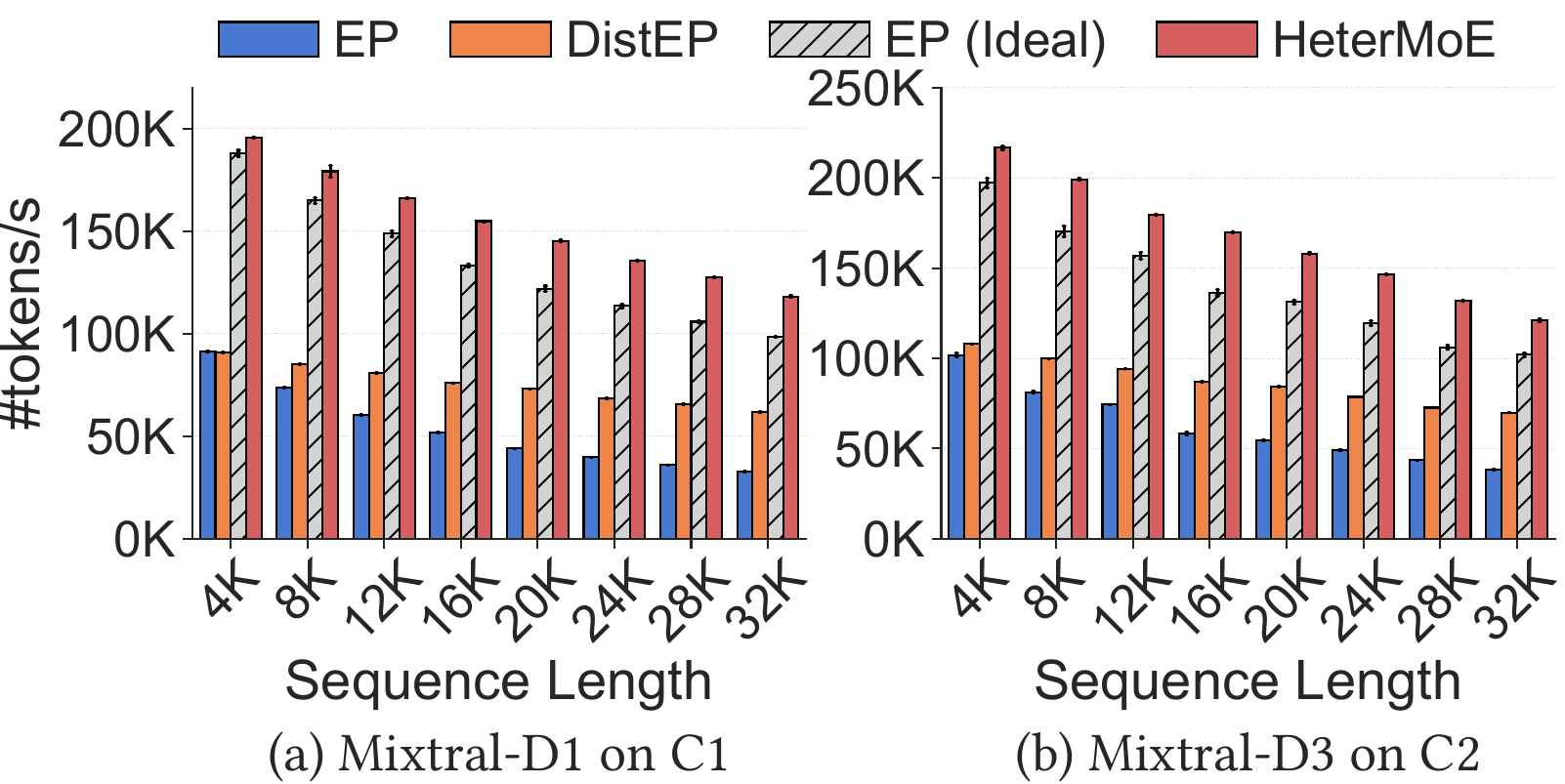}
\caption{\textbf{[Overall Performance]:} Overall training throughput on AWS with L40S and T4 GPUs.}
\label{fig:overall_aws}
\end{figure}

\noindent
\textbf{Models.}
We choose five MoE models of varying sizes based on the recent Mixtral architecture~\cite{jiang2024mixtral}, listed in \autoref{tab:model_comparison}. We explore both deeper models with more layers and wider models with larger hidden sizes. We also vary the number of experts to fit clusters of different sizes, as the number of GPUs (to distribute experts) must divide it. \cite{li2023accelerating}, we use top-2 gating for all models as in \cite{li2023accelerating,hwang2023tutel}. Following the standard practice~\cite{singh2023hybrid,chen2016training,narayanan2021efficient}, we apply activation checkpointing and FP16 mixed precision training. For each model and cluster setup, we set the global batch size to the maximum allowed by the GPU memory to run all baselines. 

\noindent
\textbf{Baselines.} As our \parallelism replaces traditional expert parallelism in heterogeneous environments, we mainly compared \sys with EP-based training under a single ZP/EP group. We implement state-of-the-art system-side MoE training optimizations from Tutel~\cite{hwang2023tutel} and Lina~\cite{li2023accelerating} on top of the widely adopted DeepSpeed~MoE~\cite{rajbhandari2022deepspeed}, which include optimized kernels and communication overlapping. 

Based on the optimized DeepSpeed~MoE, we compare two solutions: run it directly on the heterogeneous cluster using all GPUs, denoted as \textbf{EP}; naively disaggregate attention and expert modules without ZP to overlap the computation on different GPUs, denoted as \textbf{DistEP}.
Since there is no heterogeneity-aware EP solution that achieves optimal load balancing, we also compare with an ideal case by running DeepSpeed~MoE separately for each GPU model using all GPUs of that model, and then summing up their throughput. We denote it as \textbf{EP~(Ideal)}. For instance, for \texttt{O1} setup in \autoref{tab:cluster-setup}, we independently run DeepSpeed~MoE on 6xA100 and on 6xV100, then we simply add the throughput together. EP~(Ideal) assumes prefect load balancing and does not take account of communication overheads across different GPU models. It represents the theoretical maximum throughput a MoE training system without attention-expert disaggregation can achieve. 
We tune both the number of microbatches $R$ for \sys and the all-to-all partitioning degree for all compared methods to maximize performance. 

In addition, even though ZP/EP is orthogonal to data, tensor, pipeline parallelism and can be used in combination with them, we also compare \sys with pure heterogeneity-aware pipeline parallelism in \autoref{sec:eval_pipe} to show PP's limitations. 

\subsection{Overall Performance}
We report the overall training throughput in \autoref{fig:overall_on-premise} for our on-premise testbed, with sequence lengths from 4K to 32K. We note that long-context reasoning capabilities have been increasingly sought after, and recent LLMs are trained with context lengths of 32K or even up to 128K~\cite{liu2024deepseek,dubey2024llama,yang2024qwen2,glm2024chatglm}.

Since attention has quadratic computation complexity to the sequence lengths, the training throughput of all compared methods drops as sequence length increases. We observe that at shorter sequence lengths, EP still achieves decent performance, since A40 offers limited speed-ups on both attention and experts as shown in \autoref{fig:background_speedup_a40_v100}. For instance, for 4K sequences, EP maintains 82\% of \sys's performance on average, and it even reaches 91\% for Mixtral-W2 on \texttt{O2}. Since EP is not heterogeneity aware and does not balance the compute loads on A40 and V100, it suffers poor performance on longer sequences. For 16K sequences, \sys outperforms EP by 1.67x on average, and it further increases to 1.89x at 32K. In particular, for deeper but narrower Mixtral-D1 and D2, where attention is more dominant and attention-expert disaggregation is more essential, \sys's speed-ups over EP reach 2.05x and 2.29x for 32K sequences.

We also find that naively assigning experts to older GPUs without compute overlapping yields poor performance. DistEP achieves only 55\% of \sys' throughput on average across all settings. However, DistEP works relatively better for longer sequences when A40 has significant speed-ups on attention. On average, at 4K, DistEP's throughput is only 56\% that of \sys. It performs even worse than EP by 32\%. At 32K, DistEP increases to 59\% of \sys's performance. It slightly outperforms EP by 8\% on average, and up to 21\%.

Comparing \sys with the ideal case, EP~(Ideal), \sys still achieves an average speed-up of 1.18x. The speed-up is up to 1.38x for Mixtral-D1 on \texttt{O1} for 20K sequences, where \sys can match A40's attention compute time with V100's expert compute time.
For shorter sequences, attention is faster and \sys balances the computation on A40 and V100 with \asymabbr, which we provide a detailed breakdown in \autoref{sec:eval_asymea_breakdown}. Still, for O2 on Mixtral-W2, \sys suffers 3\% performance drop compared with EP (Ideal). This is due to the the limited speed-up of attention on A40 at 4K lengths, while expert computation is more intensive for the wider model variants. The limited depth and number of experts per GPU for Mixtral-W2 also reduce the search space for \asymabbr. In shallower models, the penalty of ZP from the 1st microbatch's first forward and first backward is more pronounced, where V100 has to wait for A40.

We show the results on AWS in \autoref{fig:overall_aws}. \sys still consistently outperforms all compared methods. Compared with EP (Ideal), \sys achieves an average speed-up of 1.17x and up to 1.25x. Since the performance gap between L40S and T4 is much larger than that between A40 and V100, EP and DistEP experience severe performance degradation as L40S remains idle most of the time. On average, \sys outperforms EP by 2.89x and DistEP by 1.96x.

Since we use a simulated network connection of 200~Gbps on AWS, unless otherwise specified, we perform subsequent evaluations on our on-premise testbed.  

In general, we find that the benefit of \sys over baselines is more significant for deeper models and models with more experts, matching recent model designs~\cite{liu2024deepseek,yang2024qwen2} that use many smaller experts instead of a few large ones. Although we only evaluate sequence lengths up to 32K due to limited GPU resources and memory constraints, we observe that the speed-up of \sys over EP also grows with increasing sequence lengths, which greatly benefits the training of long-context models.

\subsection{Comparing with Pipeline Parallelism}
\label{sec:eval_pipe}
\begin{figure}
\centering
\includegraphics[width=0.98\linewidth]{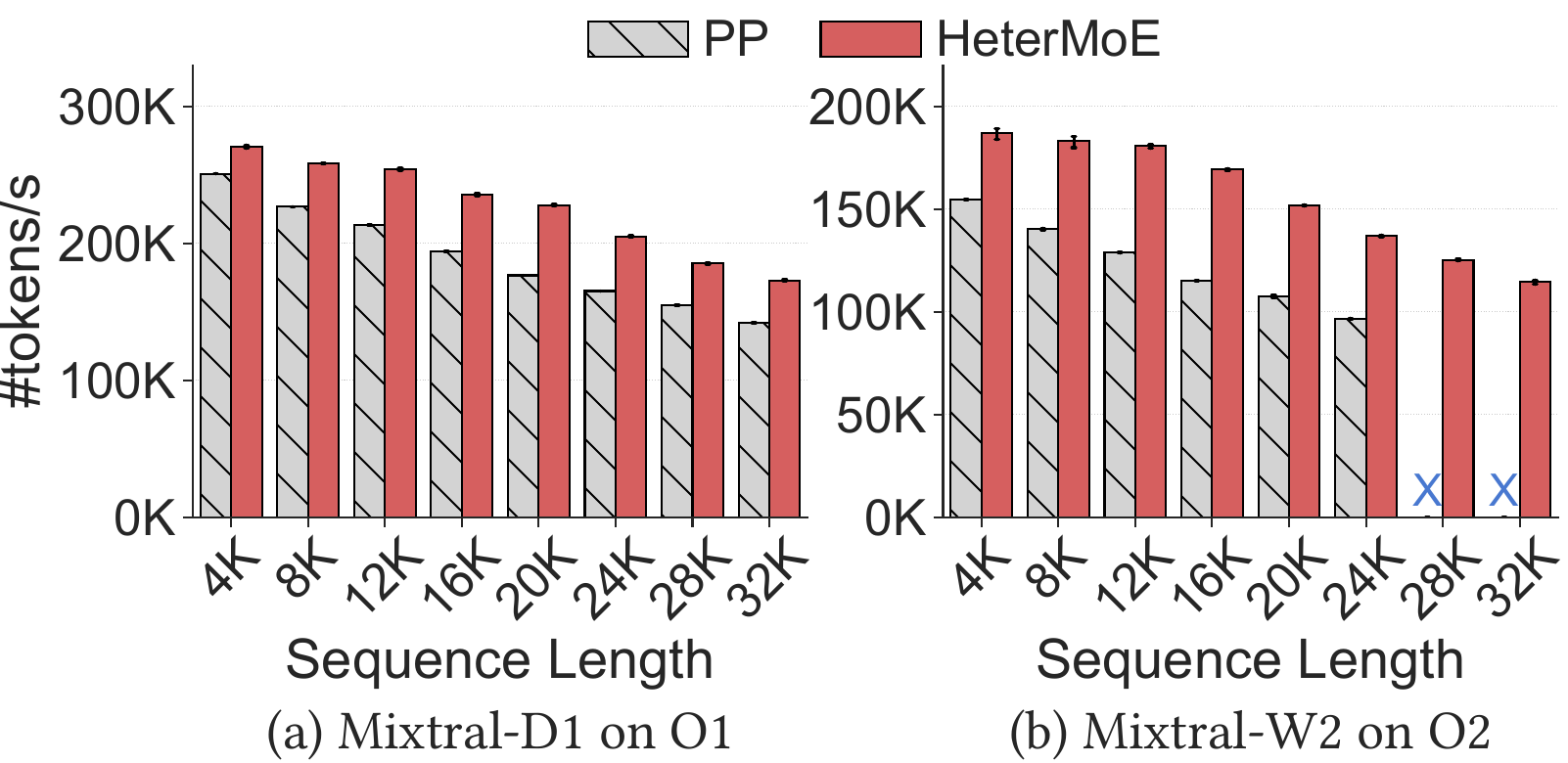}
\caption{\textbf{[Pipeline Parallelism]:} Performance comparison of \sys's \parallelism with heterogeneity-aware pipeline parallelism on the on-premise testbed. }
\label{fig:pipeline_on-premise}
\end{figure}

In this section, we study how our \parallelism compares to existing heterogeneity-aware training techniques~\cite{um2024metis,yan2024flashflex,jia2022whale}. They are mainly based on pipeline parallelism, where different pipelines stages are assigned to different GPU models. Each stage is also assigned with a different number of layers to balance the compute time, where faster GPUs are assigned with more layers. 

In \autoref{fig:pipeline_on-premise}, we present the training throughput of \sys and heterogeneity-aware pipeline parallelism (PP) on our on-premise testbed. 
For setup \texttt{O1}, we spin up 6 pipeline instances, with each containing a single A40 and a single V100. 
Similarly, for setup \texttt{O2}, we use 4 pipeline instances, each with 1x~A40 and 2x~V100. 
Training data is distributed across pipeline instances. Each GPU in a pipeline instance corresponds to a pipeline stage. We tune the layers assigned to each stage to balance the load and maximize the throughput, under the memory limitation posed by each GPU. We perform stage balance tuning independently for each model and each sequence length. 

We find that \sys consistently outperforms PP across all sequence lengths, achieving an average speed-up of 1.28x and up to 1.47x. \sys even outperforms PP by 7\%-21\% for 4K sequences, where attention-expert disaggregation provides marginal gains. PP's poor performance is due to several limitations it faces. 
First, PP does not distinguish between attention and expert modules.
Second, the granularity of PP's load balancing is restricted to a single layer, while \sys enables fine-grained load balancing for ZP with \asymabbr.
Finally, limited by the GPU memory, PP may fail to achieve an optimal layer assignment that balances the compute of each stage. 
As PP splits model by layers, a GPU may not fit even a single MoE block. In \autoref{fig:pipeline_on-premise}(b), PP cannot fit even a single layer and the activations of a single 28K or 32K sequence into V100's memory.
Hence, using PP alone to split a large MoE model is often not enough, it must be used in conjunction with expert parallelism or \sys's \parallelism to split experts within a layer. 

\subsection{Ablation Study}
\subsubsection{Impacts of GPU ratios in a ZP group}
\label{sec:eval_ablation_gpu_ratio}
\begin{figure}
\centering
\includegraphics[width=0.97\linewidth]{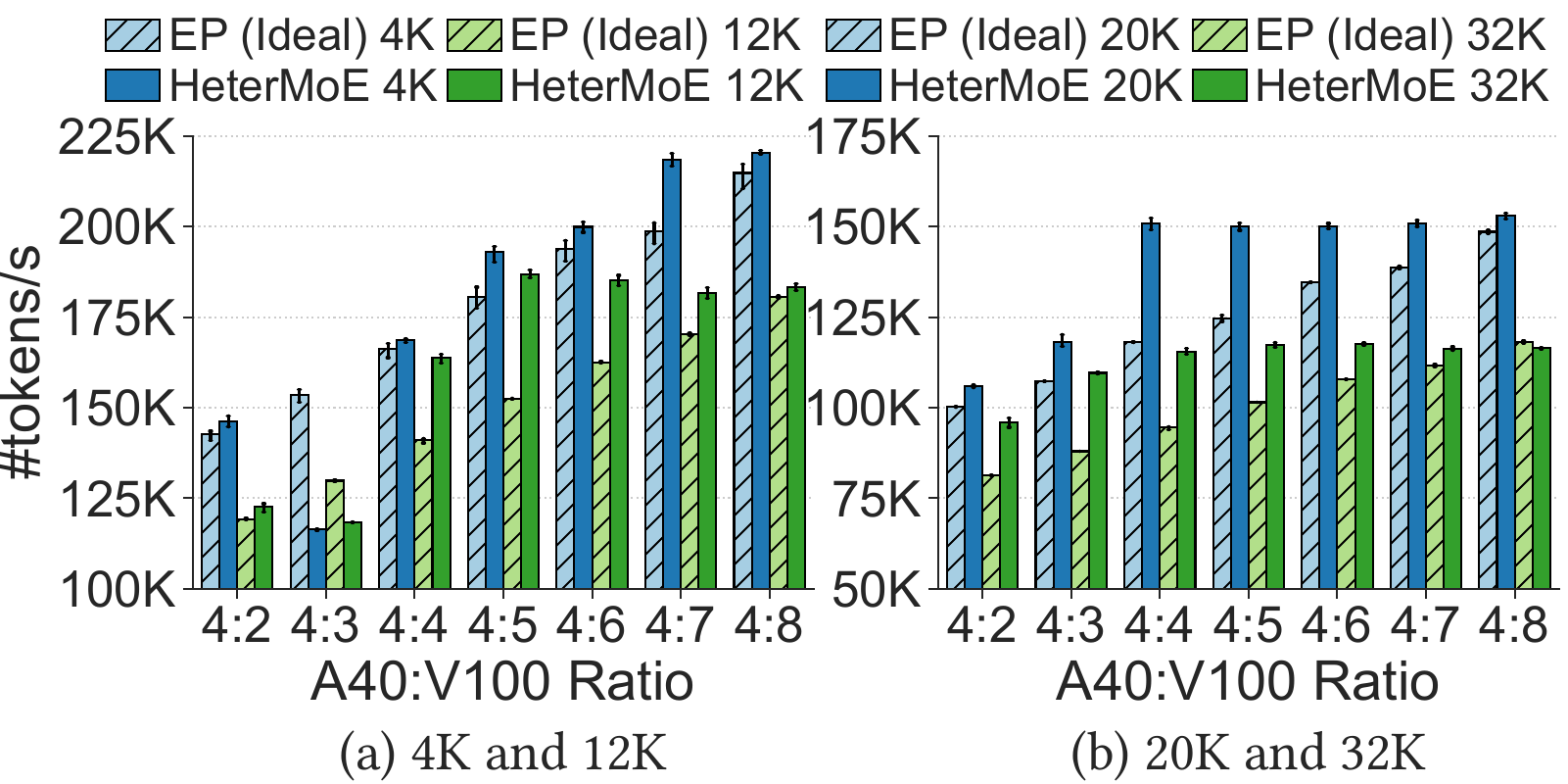}
\caption{\textbf{[Ablation Study]:} Impacts of GPU ratios in the setup of a ZP group to \sys's effectiveness.}
\label{fig:ablation_curve}
\end{figure}

Next, we study how the GPU ratio in a ZP group impacts \sys's performance. We change the ratio of A40 and V100 GPUs used in a single ZP group to study the best GPU ratio for \sys under different sequence lengths. The amount of compute and communication on each GPU depends solely on the GPU ratio, not their absolute number. Hence, to control the ratio, we fix the number of A40 GPUs to 4 while changing the number of V100 GPUs. We compare \sys's throughput with EP (Ideal) at different sequence lengths under each setup. We use Mixtral-D1 model architecture, but we scale the total number of experts linearly with the number of V100 to ensure that we can evenly distribute experts to GPUs for \sys as well as EP (Ideal). 
We present the results in \autoref{fig:ablation_curve}.

We find that the optimal A40 to V100 ratio varies for different sequence lengths. For example, \sys's effectiveness peaks at 4:5 for 12K sequences with a speed-up of 1.22x over EP (Ideal), while the peak for 4K sequences is at 4:7 with a speed-up of 1.10x. For longer sequences of 20K and 32K, \sys's throughput under 4:4 even reaches that of EP (Ideal) under 4:8 with 2x the number of V100s used, with a difference within 2\%.  At a fixed sequence length, \sys's throughput does not necessarily increase by simply increasing the relative proportion of expert GPUs (V100) in a single ZP group, as attention may instead dominate the compute time. We also note that \asymabbr is only effective at 4:2, 4:4 and 4:8, due to the divisibility requirement in \autoref{sec:asymea}.
This leads to a significant 24\% performance drop compared to EP (Ideal) under 4:3 for 4K sequences, while \sys is 3\% faster under 4:2.
For a target sequence length, a ZP group should be configured using the optimal GPU ratio with enough expert GPUs to hold all experts, while setting up multiple such ZP groups to utilize all available GPUs. We have also implemented a simulator to estimate the training throughput under different ZP group setups, where in addition to compute time, we also profile the communication time of NCCL send/recv under different message sizes. \unskip\footnote{PyTorch's all-to-all is implemented using NCCL send/recv operations.}

\subsubsection{Comparing with fully homogeneous setups}
\begin{figure}
\centering
\includegraphics[width=0.98\linewidth]{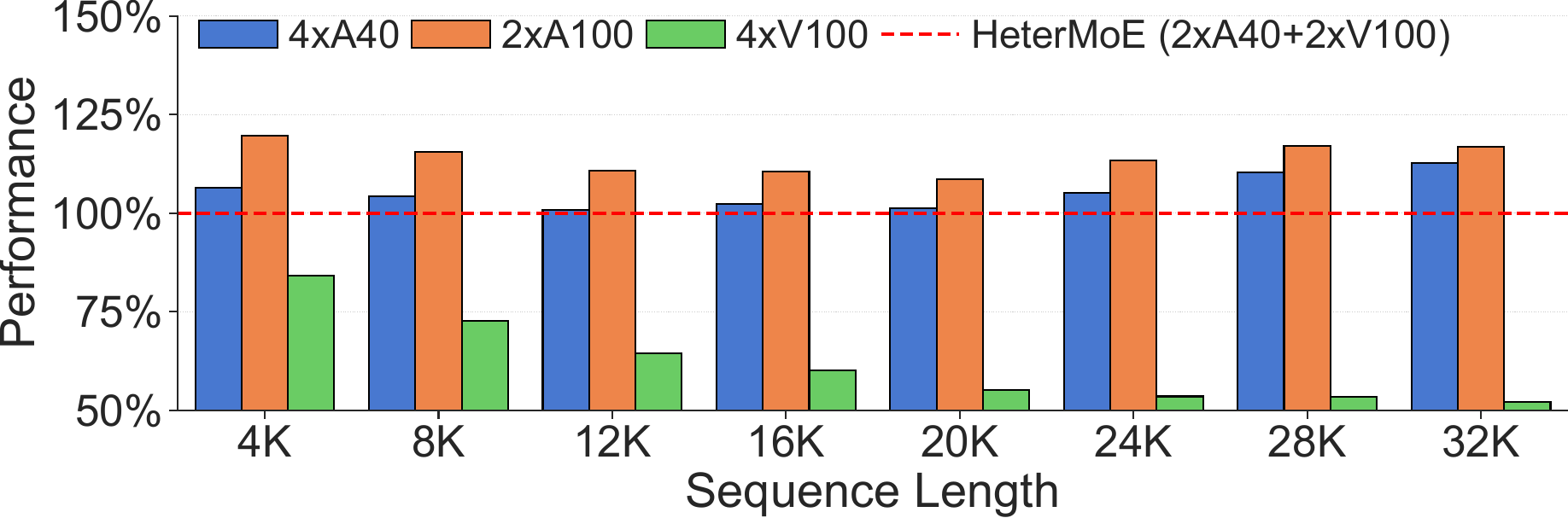}
\caption{\textbf{[Ablation Study]:} \sys's performance comparison with EP on fully homogeneous setups.}
\label{fig:ablation_full_homo}
\end{figure}

We also study how \sys compares to running EP, i.e., DeepSpeed~MoE on fully homogeneous cluster setups. We compare the performance of \sys using 2xA40 and 2xV100, to that of EP on 4xA40, 4xV100 and 2xA100 (80~GB). The 2xA100 are connected with PCIe Gen4. We use Mixtral-D1 model, but we set the total number of experts to 8 to match the number of GPUs. In \autoref{fig:ablation_full_homo}, we report the relative training throughput of EP compared to \sys, under different sequence lengths.

We note that an A100 delivers 2.1x FP16 tensor TFLOPS than an A40 while having 2.8x memory bandwidth~\cite{a40-datasheet,a100-datasheet}. Still, 2xA100 achieves only up to 1.20x speed-up over \sys, and is only 1.14x on average. Since V100 has performance similar to A40 for computing experts according to \autoref{fig:background_speedup_a40_v100}, \sys can efficiently harvest V100's compute. With half of the A40 replaced by V100, \sys still achieves 95\% the performance of 4xA40 on average. The performance gap is as little as 1\%-2\% for 12K-20K sequences, where \sys realizes decent load balancing. V100's inefficiency on attention leads to the poor performance of 4xV100. With 2xA40 and 2xV100, \sys achieves 1.66x speed-up over 4xV100 on average. Although for 4K sequences, 4xV100 still reaches 84\% the performance of \sys, it drops to 52\% for 32K sequences.

\subsubsection{Effects of \asymfull}
\label{sec:eval_asymea_breakdown}
\begin{figure}
    \begin{subfigure}{0.48\linewidth}
        \centering
        \includegraphics[width=\linewidth]{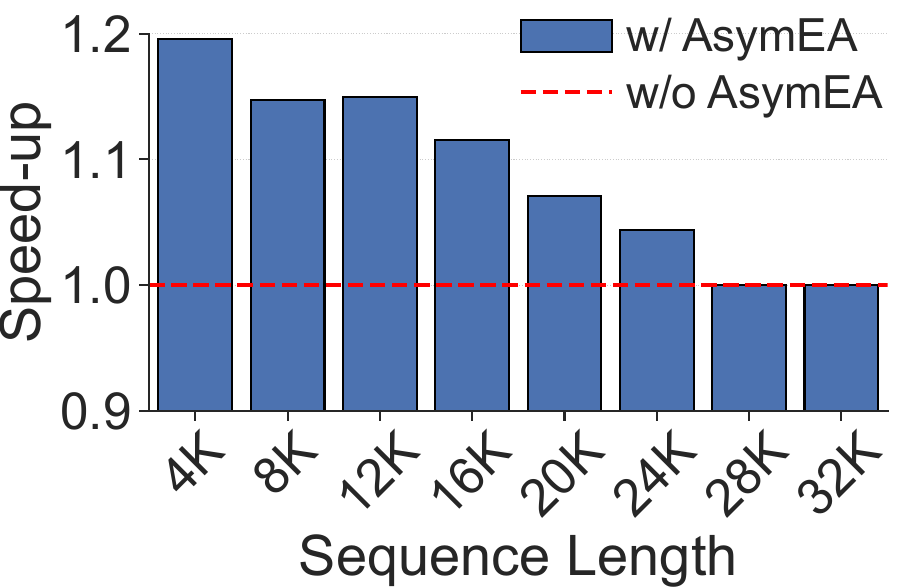}
        \caption{Mixtral-W1 on O1}
        \label{fig:ablation_asymea_w1_o1}
    \end{subfigure} \hfil
    \begin{subfigure}{0.48\linewidth}
        \centering
          \includegraphics[width=\linewidth]{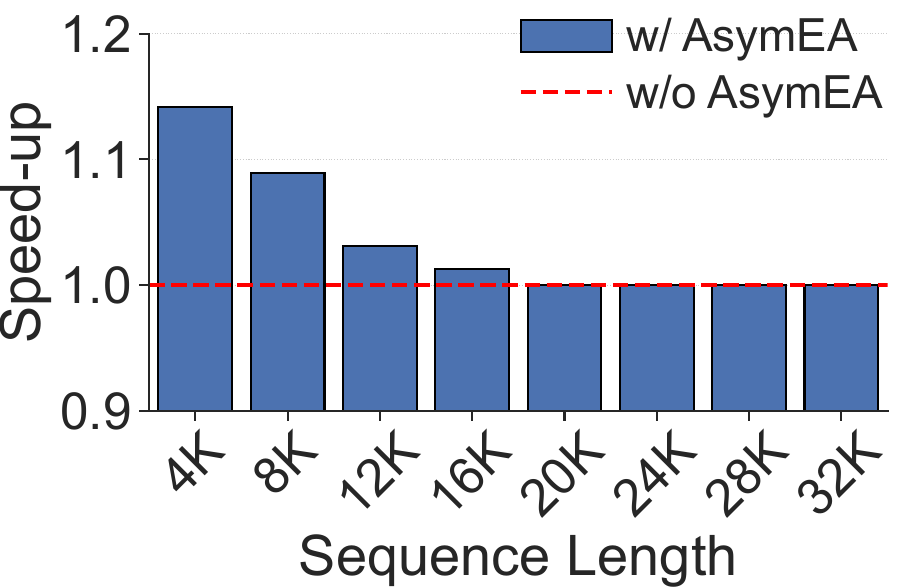}
        \caption{Mixtral-D1 on O1}
        \label{fig:ablation_asymea_d1_o1}
    \end{subfigure}
\caption{\textbf{[Ablation Study]:} Speed-up provided by \sys's \asymabbr in terms of training throughput, compared to \sys without \asymabbr.}
\label{fig:ablation_asymea}
\end{figure}

\begin{table}[t]
\caption{\textbf{[Ablation Study]:} Impacts of \sys's ZP (w/o \asymabbr) and \asymabbr to GPU utilization (percentage of time spent on effective compute) for Mixtral-D1 on \texttt{O1} setup. We also include the utilization improvement against DistEP in the parentheses.}
\label{tab:asymea_bubble_breakdown}
\centering 
\begin{tabular}{cccc}
    \toprule
    Seq. Len & \asymabbr & A40 Util. & V100 Util.  \\
    \midrule
    \multirow{2}{*}[-2pt]{8K} & \xmark & 55\% (1.69x) & 89\% (1.73x) \\    
    \cmidrule{2-4}  & \cmark & 83\% (2.51x) & 78\% (1.52x) \\
    \midrule
    \multirow{2}{*}[-2pt]{16K} & \xmark & 77\%  (1.90x) & 89\% (1.97x) \\  
    \cmidrule{2-4}  & \cmark & 86\% (2.12x) & 84\% (1.87x) \\  
    \bottomrule
\end{tabular}
\end{table}

We study the effectiveness of \asymabbr in \autoref{fig:ablation_asymea} on \texttt{O1} setup, where we compare the speed-up brought by \asymabbr for two settings under different sequence lengths. \asymabbr is most effective for shorter sequences where attention's compute time $T^{\attngpusym}_{A}$ on A40 is significantly faster than the compute time $T^{\expgpusym}_{E}$ of experts on V100. For 4K sequences, \asymabbr provides 1.20x speed-up on Mixtral-W1 and 1.14x on Mixtral-D1. As the gap between $T^{\attngpusym}_{A}$ and $T^{\expgpusym}_{E}$ closes with increasing sequence lengths, the additional contribution of \asymabbr gradually reduces. For instance, the speed-up of \asymabbr decreases to 1.04x on Mixtral-W1 at 24K. For sequences longer than 20K on Mixtral-D1 and sequences longer than 28K on Mixtral-W1, \asymabbr is no longer required, as we have $T^{\attngpusym}_{A} \geq T^{\expgpusym}_{E}$. Instead, we should increase the number of A40 in a ZP group to decrease $T^{\attngpusym}_{A}$.

We also provide the breakdown of how \sys improves GPU utilization, with and without \asymabbr. We show GPU utilization and the improvement compared to DistEP in \autoref{tab:asymea_bubble_breakdown}. We find that with ZP's ability to overlap computation on attention and expert GPUs, both A40 and V100's utilization are greatly improved, as DistEP without such overlapping spends 30\%-70\% of training time on idle waiting. Since \asymabbr moves some expert computation to A40, A40's utilization is significantly increased, at the expense of a small decrease in V100's utilization. 

\section{Related Work}
\noindent
\textbf{MoE training systems.}
Extensive literature has been proposed to specifically optimize MoE training with expert parallelism. MegaBlocks~\cite{gale2023megablocks} proposes a grouped GEMM kernel to accelerate expert computation. A series of work~\cite{he2022fastermoe,nie2023flexmoe,wu2024lazarus,zhai2023smartmoe} optimize expert placement to handle the dynamic loads on experts. All-to-all communication is also optimized, with both collective implementation optimizations~\cite{hwang2023tutel,zhang2025comet,deepep2025,shi2024schemoe,nie2022hetumoe}and compute-communication overlapping~\cite{li2023accelerating,zhang2025comet,shi2024schemoe}. In particular, DeepSeek~\cite{liu2024deepseek} introduces DualPipe to overlap computation and communication within a pair of forward and backward chunks, while optimized kernels with tuned SM allocations are used for cross-node communication. \sys, on the other hand, disaggregates attention and experts by taking advantage of their differences in performance characteristics. Most of these optimizations are orthogonal and can be incorporated into \sys.

\noindent
\textbf{Heterogeneity-aware training systems.}
Training LLMs on heterogeneous clusters requires partitioning both data and model. 
Whale~\cite{jia2022whale} balances the data load within a pipeline stage by considering heterogeneous compute and memory, but it only addresses uneven memory demands across pipeline stages by considering device assignment. Metis~\cite{um2024metis} and FlashFlex~\cite{yan2024flashflex} further balance layers across stages. SDPipe \cite{miao2023sdpipe} targets dynamic heterogeneity where GPUs suffer from uncontrollable performance variations.
HAP~\cite{zhang2024hap} unevenly distributes workloads in both data and tensor parallelism. However, tensor parallelism requires frequent synchronization and HAP cannot overlap communication.
Cephalo~\cite{benson2024cephalo} targets FSDP and only balances compute by adjusting batch sizes. Recently, HEXA-MoE~\cite{luo2024hexa} is proposed for MoE training on heterogeneous GPUs. However, similar to Cephalo, it only balances compute by unevenly splitting data. 
These approaches are complementary to \sys and can be combined with zebra parallelism.
\section{Conclusion}
This paper presents \sys, a system for efficient Mixture-of-Experts (MoE) models training on heterogeneous GPUs. 
\sys disaggregates attention and expert modules to fully utilize each GPU's capability. \sys introduces \parallelism (ZP), along with \asymfull (\asymabbr), to enable computation overlapping and fine-grained load balancing. Our evaluations show that \sys consistently outperforms existing techniques, achieving up to 2.3x speedup over existing MoE training systems, while maintaining an average 95\% throughput with half GPUs of newer generation in a homogeneous cluster replaced by older ones. We will open source \sys.

\newpage

\bibliographystyle{plain}
\bibliography{reference}

\begin{thebibliography}{10}

\bibitem{benson2024cephalo}
Runsheng Benson~Guo, Utkarsh Anand, Arthur Chen, and Khuzaima Daudjee.
\newblock Cephalo: Harnessing heterogeneous gpu clusters for training transformer models.
\newblock {\em arXiv e-prints}, pages arXiv--2411, 2024.

\bibitem{chen2016training}
Tianqi Chen, Bing Xu, Chiyuan Zhang, and Carlos Guestrin.
\newblock Training deep nets with sublinear memory cost.
\newblock {\em arXiv preprint arXiv:1604.06174}, 2016.

\bibitem{dam2024complete}
Sumit~Kumar Dam, Choong~Seon Hong, Yu~Qiao, and Chaoning Zhang.
\newblock A complete survey on llm-based ai chatbots.
\newblock {\em arXiv preprint arXiv:2406.16937}, 2024.

\bibitem{dao2023flashattention}
Tri Dao.
\newblock Flashattention-2: Faster attention with better parallelism and work partitioning.
\newblock {\em arXiv preprint arXiv:2307.08691}, 2023.

\bibitem{dao2022flashattention}
Tri Dao, Dan Fu, Stefano Ermon, Atri Rudra, and Christopher R{\'e}.
\newblock Flashattention: Fast and memory-efficient exact attention with io-awareness.
\newblock {\em Advances in neural information processing systems}, 35:16344--16359, 2022.

\bibitem{dubey2024llama}
Abhimanyu Dubey, Abhinav Jauhri, Abhinav Pandey, Abhishek Kadian, Ahmad Al-Dahle, Aiesha Letman, Akhil Mathur, Alan Schelten, Amy Yang, Angela Fan, et~al.
\newblock The llama 3 herd of models.
\newblock {\em arXiv preprint arXiv:2407.21783}, 2024.

\bibitem{gale2023megablocks}
Trevor Gale, Deepak Narayanan, Cliff Young, and Matei Zaharia.
\newblock Megablocks: Efficient sparse training with mixture-of-experts.
\newblock {\em Proceedings of Machine Learning and Systems}, 5:288--304, 2023.

\bibitem{flashattn-t4}
GitHub.
\newblock Flashattention support on t4, 2023.

\bibitem{flashattn-v100}
GitHub.
\newblock Flashattention support on v100, 2024.

\bibitem{glm2024chatglm}
Team GLM, Aohan Zeng, Bin Xu, Bowen Wang, Chenhui Zhang, Da~Yin, Dan Zhang, Diego Rojas, Guanyu Feng, Hanlin Zhao, et~al.
\newblock Chatglm: A family of large language models from glm-130b to glm-4 all tools.
\newblock {\em arXiv preprint arXiv:2406.12793}, 2024.

\bibitem{he2022fastermoe}
Jiaao He, Jidong Zhai, Tiago Antunes, Haojie Wang, Fuwen Luo, Shangfeng Shi, and Qin Li.
\newblock Fastermoe: modeling and optimizing training of large-scale dynamic pre-trained models.
\newblock In {\em Proceedings of the 27th ACM SIGPLAN Symposium on Principles and Practice of Parallel Programming}, pages 120--134, 2022.

\bibitem{hwang2023tutel}
Changho Hwang, Wei Cui, Yifan Xiong, Ziyue Yang, Ze~Liu, Han Hu, Zilong Wang, Rafael Salas, Jithin Jose, Prabhat Ram, et~al.
\newblock Tutel: Adaptive mixture-of-experts at scale.
\newblock {\em Proceedings of Machine Learning and Systems}, 5:269--287, 2023.

\bibitem{imambi2021pytorch}
Sagar Imambi, Kolla~Bhanu Prakash, and GR~Kanagachidambaresan.
\newblock Pytorch.
\newblock {\em Programming with TensorFlow: Solution for Edge Computing Applications}, pages 87--104, 2021.

\bibitem{jayaram2023sia}
Suhas Jayaram~Subramanya, Daiyaan Arfeen, Shouxu Lin, Aurick Qiao, Zhihao Jia, and Gregory~R Ganger.
\newblock Sia: Heterogeneity-aware, goodput-optimized ml-cluster scheduling.
\newblock In {\em Proceedings of the 29th Symposium on Operating Systems Principles}, pages 642--657, 2023.

\bibitem{jia2022whale}
Xianyan Jia, Le~Jiang, Ang Wang, Wencong Xiao, Ziji Shi, Jie Zhang, Xinyuan Li, Langshi Chen, Yong Li, Zhen Zheng, et~al.
\newblock Whale: Efficient giant model training over heterogeneous $\{$GPUs$\}$.
\newblock In {\em 2022 USENIX Annual Technical Conference (USENIX ATC 22)}, pages 673--688, 2022.

\bibitem{jiang2024mixtral}
Albert~Q Jiang, Alexandre Sablayrolles, Antoine Roux, Arthur Mensch, Blanche Savary, Chris Bamford, Devendra~Singh Chaplot, Diego de~las Casas, Emma~Bou Hanna, Florian Bressand, et~al.
\newblock Mixtral of experts.
\newblock {\em arXiv preprint arXiv:2401.04088}, 2024.

\bibitem{kim2022scale}
Kyeonglok Kim, Hyeonsu Lee, Seungmin Oh, and Euiseong Seo.
\newblock Scale-train: A scalable dnn training framework for a heterogeneous gpu cloud.
\newblock {\em IEEE Access}, 10:68468--68481, 2022.

\bibitem{lefaudeux2022xformers}
Benjamin Lefaudeux, Francisco Massa, Diana Liskovich, Wenhan Xiong, Vittorio Caggiano, Sean Naren, Min Xu, Jieru Hu, Marta Tintore, Susan Zhang, et~al.
\newblock xformers: A modular and hackable transformer modelling library, 2022.

\bibitem{lepikhin2020gshard}
Dmitry Lepikhin, HyoukJoong Lee, Yuanzhong Xu, Dehao Chen, Orhan Firat, Yanping Huang, Maxim Krikun, Noam Shazeer, and Zhifeng Chen.
\newblock Gshard: Scaling giant models with conditional computation and automatic sharding.
\newblock {\em arXiv preprint arXiv:2006.16668}, 2020.

\bibitem{li2023accelerating}
Jiamin Li, Yimin Jiang, Yibo Zhu, Cong Wang, and Hong Xu.
\newblock Accelerating distributed $\{$MoE$\}$ training and inference with lina.
\newblock In {\em 2023 USENIX Annual Technical Conference (USENIX ATC 23)}, pages 945--959, 2023.

\bibitem{liu2024deepseek}
Aixin Liu, Bei Feng, Bing Xue, Bingxuan Wang, Bochao Wu, Chengda Lu, Chenggang Zhao, Chengqi Deng, Chenyu Zhang, Chong Ruan, et~al.
\newblock Deepseek-v3 technical report.
\newblock {\em arXiv preprint arXiv:2412.19437}, 2024.

\bibitem{luo2024hexa}
Shuqing Luo, Jie Peng, Pingzhi Li, and Tianlong Chen.
\newblock Hexa-moe: Efficient and heterogeneous-aware moe acceleration with zero computation redundancy.
\newblock {\em arXiv preprint arXiv:2411.01288}, 2024.

\bibitem{miao2023sdpipe}
Xupeng Miao, Yining Shi, Zhi Yang, Bin Cui, and Zhihao Jia.
\newblock Sdpipe: A semi-decentralized framework for heterogeneity-aware pipeline-parallel training.
\newblock {\em Proceedings of the VLDB Endowment}, 16(9):2354--2363, 2023.

\bibitem{moreno2020training}
Sergio Moreno-Alvarez, Juan~M Haut, Mercedes~E Paoletti, Juan~A Rico-Gallego, Juan~C Diaz-Martin, and Javier Plaza.
\newblock Training deep neural networks: a static load balancing approach.
\newblock {\em The Journal of Supercomputing}, 76:9739--9754, 2020.

\bibitem{narayanan2021efficient}
Deepak Narayanan, Mohammad Shoeybi, Jared Casper, Patrick LeGresley, Mostofa Patwary, Vijay Korthikanti, Dmitri Vainbrand, Prethvi Kashinkunti, Julie Bernauer, Bryan Catanzaro, et~al.
\newblock Efficient large-scale language model training on gpu clusters using megatron-lm.
\newblock In {\em Proceedings of the international conference for high performance computing, networking, storage and analysis}, pages 1--15, 2021.

\bibitem{nie2023flexmoe}
Xiaonan Nie, Xupeng Miao, Zilong Wang, Zichao Yang, Jilong Xue, Lingxiao Ma, Gang Cao, and Bin Cui.
\newblock Flexmoe: Scaling large-scale sparse pre-trained model training via dynamic device placement.
\newblock {\em Proceedings of the ACM on Management of Data}, 1(1):1--19, 2023.

\bibitem{nie2022hetumoe}
Xiaonan Nie, Pinxue Zhao, Xupeng Miao, Tong Zhao, and Bin Cui.
\newblock Hetumoe: An efficient trillion-scale mixture-of-expert distributed training system.
\newblock {\em arXiv preprint arXiv:2203.14685}, 2022.

\bibitem{hopper-arch}
NVIDIA.
\newblock Nvidia hopper architecture in-depth, 2022.

\bibitem{grok3-cost}
NVIDIA.
\newblock Nvidia ethernet networking accelerates world’s largest ai supercomputer, built by xai, 2024.

\bibitem{a100-datasheet}
NVIDIA.
\newblock A100 datasheet, 2025.

\bibitem{a40-datasheet}
NVIDIA.
\newblock A40 datasheet, 2025.

\bibitem{rajbhandari2022deepspeed}
Samyam Rajbhandari, Conglong Li, Zhewei Yao, Minjia Zhang, Reza~Yazdani Aminabadi, Ammar~Ahmad Awan, Jeff Rasley, and Yuxiong He.
\newblock Deepspeed-moe: Advancing mixture-of-experts inference and training to power next-generation ai scale.
\newblock In {\em International conference on machine learning}, pages 18332--18346. PMLR, 2022.

\bibitem{rasley2020deepspeed}
Jeff Rasley, Samyam Rajbhandari, Olatunji Ruwase, and Yuxiong He.
\newblock Deepspeed: System optimizations enable training deep learning models with over 100 billion parameters.
\newblock In {\em Proceedings of the 26th ACM SIGKDD International Conference on Knowledge Discovery \& Data Mining}, pages 3505--3506, 2020.

\bibitem{roziere2023code}
Baptiste Roziere, Jonas Gehring, Fabian Gloeckle, Sten Sootla, Itai Gat, Xiaoqing~Ellen Tan, Yossi Adi, Jingyu Liu, Romain Sauvestre, Tal Remez, et~al.
\newblock Code llama: Open foundation models for code.
\newblock {\em arXiv preprint arXiv:2308.12950}, 2023.

\bibitem{100k-gpu-cluster}
SemiAnalysis.
\newblock 100,000 h100 clusters: Power, network topology, ethernet vs infiniband, reliability, failures, checkpointing, 2024.

\bibitem{shah2024flashattention}
Jay Shah, Ganesh Bikshandi, Ying Zhang, Vijay Thakkar, Pradeep Ramani, and Tri Dao.
\newblock Flashattention-3: Fast and accurate attention with asynchrony and low-precision.
\newblock {\em arXiv preprint arXiv:2407.08608}, 2024.

\bibitem{shi2024schemoe}
Shaohuai Shi, Xinglin Pan, Qiang Wang, Chengjian Liu, Xiaozhe Ren, Zhongzhe Hu, Yu~Yang, Bo~Li, and Xiaowen Chu.
\newblock Schemoe: An extensible mixture-of-experts distributed training system with tasks scheduling.
\newblock In {\em Proceedings of the Nineteenth European Conference on Computer Systems}, pages 236--249, 2024.

\bibitem{singh2023hybrid}
Siddharth Singh, Olatunji Ruwase, Ammar~Ahmad Awan, Samyam Rajbhandari, Yuxiong He, and Abhinav Bhatele.
\newblock A hybrid tensor-expert-data parallelism approach to optimize mixture-of-experts training.
\newblock In {\em Proceedings of the 37th International Conference on Supercomputing}, pages 203--214, 2023.

\bibitem{team2024codegemma}
CodeGemma Team, Heri Zhao, Jeffrey Hui, Joshua Howland, Nam Nguyen, Siqi Zuo, Andrea Hu, Christopher~A Choquette-Choo, Jingyue Shen, Joe Kelley, et~al.
\newblock Codegemma: Open code models based on gemma.
\newblock {\em arXiv preprint arXiv:2406.11409}, 2024.

\bibitem{aws-genai-compute-2023}
DataNami Team.
\newblock How aws plans to cope with genai's insatiable desire for compute, 12 2023.
\newblock Accessed: 2025-03-10.

\bibitem{nvidia-backlog}
TechPowerUp.
\newblock Nvidia "blackwell" gpus are sold out for 12 months, customers ordering in 100k gpu quantities, 2024.

\bibitem{um2024metis}
Taegeon Um, Byungsoo Oh, Minyoung Kang, Woo-Yeon Lee, Goeun Kim, Dongseob Kim, Youngtaek Kim, Mohd Muzzammil, and Myeongjae Jeon.
\newblock Metis: Fast automatic distributed training on heterogeneous $\{$GPUs$\}$.
\newblock In {\em 2024 USENIX Annual Technical Conference (USENIX ATC 24)}, pages 563--578, 2024.

\bibitem{llama4-cost}
Wired.
\newblock Meta’s next llama ai models are training on a gpu cluster ‘bigger than anything’ else, 2024.

\bibitem{wu2024lazarus}
Yongji Wu, Wenjie Qu, Tianyang Tao, Zhuang Wang, Wei Bai, Zhuohao Li, Yuan Tian, Jiaheng Zhang, Matthew Lentz, and Danyang Zhuo.
\newblock Lazarus: Resilient and elastic training of mixture-of-experts models with adaptive expert placement.
\newblock {\em arXiv preprint arXiv:2407.04656}, 2024.

\bibitem{yan2024flashflex}
Ran Yan, Youhe Jiang, Wangcheng Tao, Xiaonan Nie, Bin Cui, and Binhang Yuan.
\newblock Flashflex: Accommodating large language model training over heterogeneous environment.
\newblock {\em arXiv preprint arXiv:2409.01143}, 2024.

\bibitem{yang2024qwen2}
An~Yang, Baosong Yang, Beichen Zhang, Binyuan Hui, Bo~Zheng, Bowen Yu, Chengyuan Li, Dayiheng Liu, Fei Huang, Haoran Wei, et~al.
\newblock Qwen2. 5 technical report.
\newblock {\em arXiv preprint arXiv:2412.15115}, 2024.

\bibitem{zhai2023smartmoe}
Mingshu Zhai, Jiaao He, Zixuan Ma, Zan Zong, Runqing Zhang, and Jidong Zhai.
\newblock $\{$SmartMoE$\}$: Efficiently training $\{$Sparsely-Activated$\}$ models through combining offline and online parallelization.
\newblock In {\em 2023 USENIX Annual Technical Conference (USENIX ATC 23)}, pages 961--975, 2023.

\bibitem{zhang2024hap}
Shiwei Zhang, Lansong Diao, Chuan Wu, Zongyan Cao, Siyu Wang, and Wei Lin.
\newblock Hap: Spmd dnn training on heterogeneous gpu clusters with automated program synthesis.
\newblock In {\em Proceedings of the Nineteenth European Conference on Computer Systems}, pages 524--541, 2024.

\bibitem{zhang2025comet}
Shulai Zhang, Ningxin Zheng, Haibin Lin, Ziheng Jiang, Wenlei Bao, Chengquan Jiang, Qi~Hou, Weihao Cui, Size Zheng, Li-Wen Chang, et~al.
\newblock Comet: Fine-grained computation-communication overlapping for mixture-of-experts.
\newblock {\em arXiv preprint arXiv:2502.19811}, 2025.

\bibitem{deepep2025}
Chenggang Zhao, Shangyan Zhou, Liyue Zhang, Chengqi Deng, Zhean Xu, Yuxuan Liu, Kuai Yu, Jiashi Li, and Liang Zhao.
\newblock Deepep: an efficient expert-parallel communication library.
\newblock \url{https://github.com/deepseek-ai/DeepEP}, 2025.

\end{thebibliography}


\end{document}